\definecolor{amazing}{RGB}{254,67,101}
\definecolor{cardinal}{HTML}{BB333C}
\definecolor{niceblack}{HTML}{020300}
\newcommand{\raisemath}[1]{\mathpalette{\raisem@th{#1}}}
\newcommand{\raisem@th}[3]{\raisebox{#1}{$#2#3$}}
\def\N{\mathbb{N}} 	
\def\Z{\mathbb{Z}}
\newclass{\paraNP}{paraNP}
\newcommand\overbar[2][3]{{}\mkern#1mu\overline{\mkern-#1mu#2}}
\newcommand\restr[2]{{\left.\kern-\nulldelimiterspace #1 \vphantom{\big|} \right|_{#2} }}
   \def\sminor^#1{
    \preccurlyeq_{\mathrlap{\mathsf{m}}}^{#1}
} \def\ssminor^#1{\mathbin{\dot\preccurlyeq_{{\mathrlap{\mathsf{m}}}}^{#1}}\,
} \def\stminor^#1{
    \preccurlyeq_{\mathrlap{\mathsf{t}}}^{#1}
} \def\sstminor^#1{\mathbin{\dot\preccurlyeq_{{\mathrlap{\mathsf{t}}}}^{#1}}\,
}
\def\grad_#1{\nabla\!_{#1}}
\def\sgrad_#1{{\dot\nabla}\!_{#1}}
\def\topgrad_#1{\widetilde \nabla\!_{#1}}
\def\stopgrad_#1{\dot{\widetilde\nabla}\!_{#1}}
\def\topomega_#1{\widetilde \omega_{#1}}
\def\colnum_#1{ \operatorname{col}_{#1} }
\def\wcolnum_#1{ \operatorname{wcol}_{#1} }
\def\adm_#1{ \operatorname{adm}_{#1} }
\renewcommand{\leq}{\leqslant}
\renewcommand{\geq}{\geqslant}
\renewcommand{\epsilon}{\varepsilon}
\newcommand{\eps}{\epsilon}
\renewcommand{\emptyset}{\varnothing}
\newlength{\convarrowwidth}
\newcommand{\widthm}[1]{ \mathbf{#1} }
 \DeclareMathOperator{\width}{ \widthm{width} } 
\def\YYYY{{Y_0 \uplus Y_1 \uplus \cdots \uplus Y_\ell}} \WithSuffix\def\YYYY'{{Y'_0 \uplus Y'_1 \uplus \cdots \uplus Y'_{\ell'}}}
\def\vc{\operatorname{\mathbf{vc}}}
\newcommand{\yaay}{\kern4pt \ding{51} \kern-8pt \ding{51}}
\def\any{\mathord{\color{black!33}\bullet}}
\DeclarePairedDelimiter\ceil{\lceil}{\rceil}
\DeclarePairedDelimiter\floor{\lfloor}{\rfloor}
\newtheorem*{open*}{Open question}
\renewcommand*\etal{\xperiodafter{\emph{et~al}}}
\newcommand{\defineq}{\coloneqq}
\newcommand*\varrule[1][0.4pt]{\leavevmode\leaders\hrule height#1\hfill\kern0pt}
\newenvironment{tightcenter}
 {\parskip=0pt\par\nopagebreak\centering}
 {\par\noindent\ignorespacesafterend}
\newlength{\RoundedBoxWidth}
\newsavebox{\GrayRoundedBox}
\newenvironment{GrayBox}[1]{\setlength{\RoundedBoxWidth}{\textwidth-4.5ex}
    \def\boxheading{#1}
    \begin{lrbox}{\GrayRoundedBox}
       \begin{minipage}{\RoundedBoxWidth}}{\end{minipage}
    \end{lrbox}\begin{tightcenter}\begin{tikzpicture}\node(Text)[draw=black!20,fill=white,rounded corners,inner sep=2ex,text width=\RoundedBoxWidth]{\usebox{\GrayRoundedBox}};
        \coordinate(x) at (current bounding box.north west);
        \node [draw=white,rectangle,inner sep=3pt,anchor=north west,fill=white] 
        at ($(x)+(6pt,.75em)$) {\boxheading};
    \end{tikzpicture}
    \end{tightcenter}\vspace{0pt}\ignorespacesafterend
}
\newenvironment{ProblemDef*}[2][]{\noindent\ignorespaces \FrameSep=6pt\parindent=0pt\vspace*{-.5em}
                \ifthenelse{\isempty{#1}}{\begin{GrayBox}{\textsc{#2}}}{\begin{GrayBox}{\textsc{#2} parametrised by~{#1}}}
                \begin{tabular*}{\textwidth}{@{\hspace{.1em}} >{\itshape} p{1.6cm} p{0.8\textwidth} @{}}}{
                \end{tabular*}\end{GrayBox}\vspace*{-.5em}
                \ignorespacesafterend
            }       
\newlength{\wleft}  \newlength{\wright}
\definecolor{Maroon}{cmyk}{0, 0.87, 0.68, 0.32}
\definecolor{RoyalBlue}{cmyk}{1, 0.50, 0, 0}
\definecolor{Black}{cmyk}{0, 0, 0, 0}
\definecolor{White}{rgb}{1, 1, 1}
\definecolor{Red}{rgb}{.9, 0, .1}
\definecolor{Magenta}{rgb}{.7, 0, .7}
\let\orgdescriptionlabel\descriptionlabel
\def\@savelabel{}
\renewcommand*{\descriptionlabel}[1]{\let\orglabel\label
  \let\label\@gobble
  \phantomsection
  \def\@savelabel{#1}
  \edef\@currentlabel{{\def\hfil{}#1}}\edef\@currentlabelname{#1}\let\label\orglabel
  \orgdescriptionlabel{#1}}
\def\namedlabel#1#2{\begingroup
   \def\@currentlabel{#1}\label{#2}\endgroup
}
 \renewcommand{\th}{\ifmmode ^\mathrm{th}\else \textsuperscript{th}\xspace \fi }
\newcommand{\nd}{\ifmmode ^\mathrm{nd}\else \textsuperscript{nd}\xspace \fi }
\newcommand{\rd}{\ifmmode ^\mathrm{rd}\else \textsuperscript{rd}\xspace \fi }
\def\Mobius{M\"{o}bius\xspace}
\newcolumntype{m}{>{$}l<{$}}
\newcolumntype{M}{>{$\displaystyle}l<{$}}
\newcolumntype{L}{l}
\newcolumntype{C}{c}
\newcolumntype{R}{r}
\newcolumntype{X}{>{\global\let\currentrowstyle\relax}}
\newcolumntype{^}{>{\currentrowstyle}}
\renewcommand{\G}{\mathbb G}
\newcommand{\X}{\mathbb X}
\newcommand{\Y}{\mathbb Y}
\renewcommand{\Z}{\mathbb Z}
\renewcommand{\H}{\mathbb H}
\newcommand{\ix}{\iota} \newcommand{\lc}{\operatorname{lc}}
\newcommand{\asym}{\operatorname{hoa}}
\keywords{vc-dimension, datastructure, degeneracy, enumerating} 
\authorrunning{P.\ G.\ Drange, P.\ Greaves, I.\ Muzi, and F.\ Reidl}
\title{Computing complexity measures of degenerate graphs}
\author{Pål Grønås Drange}{University of Bergen, Norway}{Pal.Drange@uib.no}
{https://orcid.org/0000-0001-7228-6640}{Supported by the Research council of Norway, grant number~329745: \emph{Machine Teaching for Explainable AI}.}
\author{Patrick Greaves}{Birkbeck, University of London, UK}{p.greaves@bbk.ac.uk}{}{}
\author{Irene Muzi}{Birkbeck, University of London, UK}{i.muzi@bbk.ac.uk}{https://orcid.org/0000-0003-2410-6523}{}
\author{Felix Reidl}{Birkbeck, University of London, UK}{f.reidl@bbk.ac.uk}{https://orcid.org/0000-0002-2354-3003}{}
\newcommand{\DataStruct}{\ensuremath{Q}}
\begin{document}

\maketitle
\begin{abstract}
  We show that the VC-dimension of a graph can be computed in time
  $n^{\lceil\log d+1\rceil} d^{O(d)}$, where~$d$ is the degeneracy of the input graph. The core idea of our algorithm is a data structure to
  efficiently query the number of vertices that see a specific subset of
  vertices inside of a (small) query set. The construction of this data structure takes time $O(d2^dn)$, afterwards queries can be computed efficiently using fast Möbius inversion.

  This data structure turns out to be useful for a range of tasks,
  especially for finding bipartite patterns in degenerate graphs, and we outline an efficient algorithms for counting the number of times specific patterns occur in a graph. The largest factor in the running time of this algorithm is~$O(n^c)$, where~$c$ is a parameter of the pattern we call its \emph{left covering number}.

  Concrete applications of this algorithm include counting the number of (non-induced) bicliques in linear time, the number of co-matchings in quadratic time, as well as a constant-factor approximation of the ladder index in linear time.

  Finally, we supplement our theoretical results with several
  implementations and run experiments on more than 200 real-world
  datasets---the largest of which has 8 million edges---where we obtain
  interesting insights into the VC-dimension of real-world networks.
\end{abstract}

\section{Introduction}

Our work began with the simple question: What is the  Vapnik--Chervonenkis
(VC) dimension of real-world networks? That is, what is the largest vertex set $X$ such that every subset~$X' \subseteq X$ is the neighbourhood (when restricted to $X$) of some vertex in the network?

This parameter, developed in the
context of learning theory, happens to be extremely useful in the theory of
sparse graphs (\eg \cite{NComplexity17}) and is one possible method of capturing the
``complexity'' of an object. It is therefore a natural statistic to consider
when a) trying to categorise networks and b) identifying structural
properties that can be leveraged to design efficient algorithms.

As the best-known general algorithm to compute the VC-dimension of a graph takes~$O(n^{\log n})$ time, and in fact the problem being \LOGNP-complete~\cite{papadimitriou1996limited}, we investigated whether a better algorithm is possible if we assume our input graph to be sparse, more precisely, to be $d$-degenerate. This choice is motivated by the observation that the degeneracy for most real-world networks is small (see \eg \cite{demaine2019structural} and the results in the appendix).

Our first achievement is an algorithm that computes the VC-dimension of a $d$-degenerate graph in time~$O(n^{\ceil{\log d + 1}} d^{O(d)})$.  A core concept is a novel data structure which enables us to efficiently query the
size of the intersection of several neighbourhoods
for a small set of vertices, described in Section~\ref{sec:gener-patt-find}, which we use to quickly determine whether a given candidate set is shattered by its neighbours.

But the general idea of this algorithm can be generalised to other bipartite ``patterns'' like bicliques, co-matchings, and ladders (defined in Section~\ref{sec:patterns}). These objects are also closely related to notions of ``complexity'' of graphs. They appear, for example, in the study of graph width measures \cite{eiben2021unifying} and algorithm design for sparse classes \cite{ProgressiveAlgs19} (see also there for connections to stability theory). Our general pattern-finding algorithm presented in Section~\ref{sec:gener-patt-find} can count bicliques in linear time, co-matchings in quadratic time and find partial ladders in linear time, see Section~\ref{sec:concr-appl} for these and further results.

Dense structures like cliques or bicliques are famously important in the analysis of networks, and we suggest that co-matchings and ladders might be of similar interest---but without a program to compute them, we cannot hope for these statistics to be trialled in practice. We therefore implemented algorithms to compute the VC-dimension, ladder index, maximum biclique\footnote{There are probably faster programs to compute bicliques in practice, we  compute this statistic here as a baseline.} and maximum co-matching of a graph. To establish their practicality, we ran these four algorithms on 206 real-world networks from various sources, see Section~\ref{sec:experiments}. The VC-dimension algorithm in our experiments terminated within 10 minutes on networks with up to $\sim$33K vertices, the other three on networks up to $\sim$93K vertices. This is already squarely in the region of ``practical'' for certain types of networks and we believe that with further engineering---in particular to improve space efficiency---our implementation can be used to compute these statistics on much larger networks.

\medskip
\noindent \textbf{\textsf{Prior work.}}
We briefly mention a few relevant previous articles on the subject.
Eppstein, Löffler, and Strash~\cite{eppstein2013listing} gave an
algorithm for enumerating maximal cliques in $d$-degenerate graphs in
$O(dn3^{d/3})$ time, \ie, fixed-parameter tractable time when
parameterized by the degeneracy.  They also give experimental results
showing that their algorithm works well on large real-world networks.
Bera, Pashanasangi, and Seshadhri~\cite{bera2020linear}, extending the
classic result by Chiba and
Nishizeki~\cite{chiba1985arboricitysubgraph}, show that for all
patterns~$H$ of size less than six, we can count the number of
appearances of~$H$ in a $d$-degenerate graph~$G$ in time
$O(m \cdot d^{k-2})$, where~$m$ is the number of edges in~$G$ and~$k$ is
the number of vertices in~$H$.
Recently, Bressan, and Roth~\cite{bressan2022exact} gave algorithms for
counting copies of a graph~$H$ in a $d$-degenerate graph~$G$ in time
$f(d,k) \cdot n^{\mathbf{im}(H)} \log n$, for some function~$f$,
where~$k$ again is the number of vertices in~$H$, $n$ the number of
vertices in~$G$, and $\mathbf{im}(H)$ is the size of a largest induced
matching in~$H$.

\section{Preliminaries}
\label{sec:preliminaries}

For an integer $k$, we use $[k]$ as a short-hand for the set $\{0, 1, 2, \dots, k-1\}$.
We use blackboard bold letters like~$\X$ to denote sets~$X$
 associated with a total order~$<_\X$. The
 \emph{index function}~$\ix_\X \colon X \to \N$ maps elements of~$X$ to their
 corresponding position in~$\X$. We extend this function to sets via
 $\ix_\X(S) = \{ \ix_\X(s) \mid s \in S \}$. For any integer~$i \in [|X|]$ we write
 $\X[i]$ to mean the $i$th element in the ordered set.
 An \emph{index set}~$I$ for~$\X$ is simply a subset of~$[|X|]$ and we extend the index notation to sets via~$\X[I] \defineq \{ \X[i] \mid i \in I \}$.
We write $\pi(H)$ for the set of all permutations of~$H$.

For a graph $G$ we use $V(G)$ and $E(G)$ to refer to its vertex- and edge-set,
respectively. We used the short hands $|G| \defineq |V(G)|$ and $\|G\| \defineq |E(G)|$.

An \emph{ordered graph} is a pair $\G = (G, <)$ where $G$ is a graph and $<$ a
total ordering of $V(G)$. We write $<_\G$ to denote the ordering for a given
ordered graph and extend this notation to the derived relations $\leq_\G$,
$>_\G$, $\geq_\G$.

We use the same notations for graphs and ordered graphs, additionally we write
$N^-(u) \defineq \{ v \in N(u) \mid v <_\G u \}$ for the \emph{left neighbourhood}
and $N^+(u) \defineq \{ v \in N(u) \mid v >_\G u \}$ for the \emph{right
neighbourhood} of a vertex $u \in \G$. We further use $d_\G^-(u)$ and
$d_G^+(u)$ for the left and right degree, as well as $\Delta^-(\G) \defineq \max_{u
\in \G} d_\G^-(u)$ and $\Delta^+(\G) \defineq \max_{u \in \G} d_\G^+(u)$. We omit the graphs in the subscripts if clear from the context.

A graph~$G$ is $d$-degenerate if there exists an ordering $\G$ such that
$\Delta^-(\G) \leq d$. An equivalent definition is that a graph is
$d$-degenerate if every subgraph has a vertex of degree at most $d$. The
number of edges in a $d$-degenerate graph is bounded by~$dn$ and many
important sparse graph classes---bounded treewidth, planar graphs, graphs
excluding a minor---have finite degeneracy. The degeneracy ordering of a graph can be computed in time $O(n + m)$ \cite{matula_smallest-last_1983}, and $O(dn)$ for $d$-degenerate graphs.

Let $\mathcal F \subseteq 2^U$ be a set family over~$U$. We define the intersection of a set family with set~$X \subseteq U$ as $\mathcal F \cap X \defineq \{ F \cap X \mid F \in \mathcal F \}$. A set~$X \subseteq U$ is then \emph{shattered} by $\mathcal F$
if $\mathcal F \cap X = 2^X$. The \emph{graph representation} of a set family~$\mathcal F$ is the bipartite graph~$G(\mathcal F) = (\mathcal F, U, E)$ where
for each $F \in \mathcal F$ and $x \in U$ we have the edge $Fx \in E$ iff $x \in F$. In the other direction, we define for a graph~$G$ its \emph{neighbourhood set system} $\mathcal F(G) \defineq \{ N(v) \mid v \in G\}$.

The Vapnik--Chervonenkis dimension
(VC-dimension) of a set family~$\mathcal F \subseteq 2^U$ is the size of the
largest set in~$U$ that is shattered by~$\mathcal F$ and we write this
quantity as~$\vc(\mathcal F)$. The VC-dimension of a graph~$G$ is defined as
the VC-dimension of its neighbourhood set system, \ie
$\vc(G) \defineq \vc(\mathcal F(G))$.

\subsection{Set dictionaries}

\noindent
In the following we will make heavy use of data structures that model
functions of the form $f\colon 2^U \to \Z$ for some universe $U$. Since the
arguments in our use-case are assumed to be small, we use prefix-tries~\cite{sedgewick2011algorithms} in our theoretical analysis (see notes on practical implementations below):\looseness-1

\begin{definition}[Subset dictionary]
  Let~$U$ be a set and let~$\mathbb U$ be an arbitrary total order of~$U$.
  A \emph{subset dictionary}~$D$ over~$U$ associates a key $X \subseteq U$ with an integer $D[X]$ by storing the sequence $\mathbb X$ of~$X$ under~$<_{\mathbb U}$ in a prefix trie.
\end{definition}

\noindent
Accordingly, insertion/update/deletion of a value for a key~$X$ takes time~$O(|X|)$ \emph{if} we can assume the key~$X$ to be present in some canonical order.  Our algorithms all work on graphs imbued with a (degeneracy) ordering and we will sort the left-neighbourhood $N^-(\any)$ of each vertex according to this global ordering, which we will simply call ``sorting the left-neighbourhoods'' for brevity. Subsets of these left-neighbourhoods are assumed to inherit this ordering, which covers all operations that we will need in our algorithms, which in conclusion means that we can assume that all sets used as keys in subset dictionaries have a canonical ordering.

Unless otherwise noted, we will use the convention that~$D[X] = 0$ for all keys~$X$ that have not been inserted into~$D$.

\subsection{Bipartite patterns and left-covers}\label{sec:patterns}

\begin{definition}[Pattern]
  A \emph{pattern} $H$ is a complete graph whose edges are partitioned into
  sets $B$, $R$, and $W$ (black, red and white). We say that a graph $G$
  \emph{contains} $H$ (or $H$ \emph{appears in} $G$) if there exists a vertex set $S \subseteq V(G)$ and a
  bijection $\phi\colon V(H) \to S$ such that $uv \in B \implies
  \phi(u)\phi(v) \in E(G)$ and $uv \in R \implies \phi(u)\phi(v) \not \in
  E(G)$.

  We say that a pattern~$H$ is \emph{bipartite} if the vertex set of $H$ can
  be partitioned into two sets $X,Y$ such that all edges inside of $X$ and
  inside of $Y$ are white.
\end{definition}

\noindent
For a vertex $v \in V(H)$ we write $N(v)$ to denote its neighbours according to
the black edge relation only. An \emph{ordered} pattern~$\H$ is a pattern
whose vertex set comes with a linear order $<_\H$. Given a vertex $v \in
\H$, we write $N^-(u) \defineq \{ v \in N(u) \mid v <_\H u \}$.

A \emph{ladder} (sometimes called a chain graph) $L_n$ of size~$n$ is a bipartite pattern defined on two vertex sequences $A = (a_i)_{i \in [n]}$ and $B = (b_i)_{i \in [n]}$, where $a_i b_j \in B$ if~$i > j$ and
$a_i b_j \in R$ otherwise. Note that for any $1 \leq l \leq r \leq n$ the subgraph induced by the sequences $(a_i)_{i\in[l,r]}$ and $(b_i)_{i\in[l,r]}$ induces a ladder.
A \emph{semi-ladder}~$\tilde L_n$ has the same black edges, but only the edges $a_ib_i$, $i \in [n]$ are red.
All the remaining edges are white:

\tikzset{vertex/.style={circle,fill=niceblack!15,minimum size=18pt,inner sep=0pt
    },red edge/.style={draw,thick,-,cardinal!80
    },black edge/.style={draw,line width=.5pt,-,niceblack
    }
}

\begin{center}
\scalebox{.9}{\begin{tikzpicture}[scale=1.2]
    \node[vertex] (a1) at (1,0) {$a_1$};
    \node[vertex] (a2) at (2,0) {$a_2$};
    \node[vertex] (a3) at (3,0) {$a_3$};
    \node[vertex] (a4) at (4,0) {$a_4$};
    \node[vertex] (a5) at (5,0) {$a_5$};
    \node[vertex] (a6) at (6,0) {$a_6$};

    \node[vertex] (b1) at (1,-1.5) {$b_1$};
    \node[vertex] (b2) at (2,-1.5) {$b_2$};
    \node[vertex] (b3) at (3,-1.5) {$b_3$};
    \node[vertex] (b4) at (4,-1.5) {$b_4$};
    \node[vertex] (b5) at (5,-1.5) {$b_5$};
    \node[vertex] (b6) at (6,-1.5) {$b_6$};

    \draw[red edge] (a1) -- (b1);
    \draw[red edge] (a1) -- (b2);
    \draw[red edge] (a1) -- (b3);
    \draw[red edge] (a1) -- (b4);
    \draw[red edge] (a1) -- (b5);
    \draw[red edge] (a1) -- (b6);

    \draw[black edge] (a2) -- (b1);
    \draw[red edge] (a2) -- (b2);
    \draw[red edge] (a2) -- (b3);
    \draw[red edge] (a2) -- (b4);
    \draw[red edge] (a2) -- (b5);
    \draw[red edge] (a2) -- (b6);

    \draw[black edge] (a3) -- (b1);
    \draw[black edge] (a3) -- (b2);
    \draw[red edge] (a3) -- (b3);
    \draw[red edge] (a3) -- (b4);
    \draw[red edge] (a3) -- (b5);
    \draw[red edge] (a3) -- (b6);

    \draw[black edge] (a4) -- (b1);
    \draw[black edge] (a4) -- (b2);
    \draw[black edge] (a4) -- (b3);
    \draw[red edge] (a4) -- (b4);
    \draw[red edge] (a4) -- (b5);
    \draw[red edge] (a4) -- (b6);

    \draw[black edge] (a5) -- (b1);
    \draw[black edge] (a5) -- (b2);
    \draw[black edge] (a5) -- (b3);
    \draw[black edge] (a5) -- (b4);
    \draw[red edge] (a5) -- (b5);
    \draw[red edge] (a5) -- (b6);

    \draw[black edge] (a6) -- (b1);
    \draw[black edge] (a6) -- (b2);
    \draw[black edge] (a6) -- (b3);
    \draw[black edge] (a6) -- (b4);
    \draw[black edge] (a6) -- (b5);
    \draw[red edge] (a6) -- (b6);
\end{tikzpicture}}\hspace*{1em}
\scalebox{.9}{\begin{tikzpicture}[scale=1.2]
    \node[vertex] (a1) at (1,0) {$a_1$};
    \node[vertex] (a2) at (2,0) {$a_2$};
    \node[vertex] (a3) at (3,0) {$a_3$};
    \node[vertex] (a4) at (4,0) {$a_4$};
    \node[vertex] (a5) at (5,0) {$a_5$};
    \node[vertex] (a6) at (6,0) {$a_6$};

    \node[vertex] (b1) at (1,-1.5) {$b_1$};
    \node[vertex] (b2) at (2,-1.5) {$b_2$};
    \node[vertex] (b3) at (3,-1.5) {$b_3$};
    \node[vertex] (b4) at (4,-1.5) {$b_4$};
    \node[vertex] (b5) at (5,-1.5) {$b_5$};
    \node[vertex] (b6) at (6,-1.5) {$b_6$};

    \draw[red edge] (a1) -- (b1);

    \draw[black edge] (a2) -- (b1);
    \draw[red edge] (a2) -- (b2);

    \draw[black edge] (a3) -- (b1);
    \draw[black edge] (a3) -- (b2);
    \draw[red edge] (a3) -- (b3);

    \draw[black edge] (a4) -- (b1);
    \draw[black edge] (a4) -- (b2);
    \draw[black edge] (a4) -- (b3);
    \draw[red edge] (a4) -- (b4);

    \draw[black edge] (a5) -- (b1);
    \draw[black edge] (a5) -- (b2);
    \draw[black edge] (a5) -- (b3);
    \draw[black edge] (a5) -- (b4);
    \draw[red edge] (a5) -- (b5);

    \draw[black edge] (a6) -- (b1);
    \draw[black edge] (a6) -- (b2);
    \draw[black edge] (a6) -- (b3);
    \draw[black edge] (a6) -- (b4);
    \draw[black edge] (a6) -- (b5);
    \draw[red edge] (a6) -- (b6);
\end{tikzpicture}}
\end{center}

\noindent
The \emph{Ladder index} of a graph~$G$ is the largest~$n$ such that $G$ contains the pattern $L_n$.

A \emph{co-matching} $\overbar M_n$ (also called \emph{crown}) has black edges $a_ib_j$ for~$i \neq j$ and red edges $a_ib_i$ for~$i \in [n]$.

\begin{center}
\scalebox{.9}{\begin{tikzpicture}[scale=1.3]
    \node[vertex] (a1) at (1,0) {$a_1$};
    \node[vertex] (a2) at (2,0) {$a_2$};
    \node[vertex] (a3) at (3,0) {$a_3$};
    \node[vertex] (a4) at (4,0) {$a_4$};
    \node[vertex] (a5) at (5,0) {$a_5$};
    \node[vertex] (a6) at (6,0) {$a_6$};

    \node[vertex] (b1) at (1,-1.5) {$b_1$};
    \node[vertex] (b2) at (2,-1.5) {$b_2$};
    \node[vertex] (b3) at (3,-1.5) {$b_3$};
    \node[vertex] (b4) at (4,-1.5) {$b_4$};
    \node[vertex] (b5) at (5,-1.5) {$b_5$};
    \node[vertex] (b6) at (6,-1.5) {$b_6$};

    \draw[red edge] (a1) -- (b1);
    \draw[black edge] (a1) -- (b2);
    \draw[black edge] (a1) -- (b3);
    \draw[black edge] (a1) -- (b4);
    \draw[black edge] (a1) -- (b5);
    \draw[black edge] (a1) -- (b6);

    \draw[black edge] (a2) -- (b1);
    \draw[red edge] (a2) -- (b2);
    \draw[black edge] (a2) -- (b3);
    \draw[black edge] (a2) -- (b4);
    \draw[black edge] (a2) -- (b5);
    \draw[black edge] (a2) -- (b6);

    \draw[black edge] (a3) -- (b1);
    \draw[black edge] (a3) -- (b2);
    \draw[red edge] (a3) -- (b3);
    \draw[black edge] (a3) -- (b4);
    \draw[black edge] (a3) -- (b5);
    \draw[black edge] (a3) -- (b6);

    \draw[black edge] (a4) -- (b1);
    \draw[black edge] (a4) -- (b2);
    \draw[black edge] (a4) -- (b3);
    \draw[red edge] (a4) -- (b4);
    \draw[black edge] (a4) -- (b5);
    \draw[black edge] (a4) -- (b6);

    \draw[black edge] (a5) -- (b1);
    \draw[black edge] (a5) -- (b2);
    \draw[black edge] (a5) -- (b3);
    \draw[black edge] (a5) -- (b4);
    \draw[red edge] (a5) -- (b5);
    \draw[black edge] (a5) -- (b6);

    \draw[black edge] (a6) -- (b1);
    \draw[black edge] (a6) -- (b2);
    \draw[black edge] (a6) -- (b3);
    \draw[black edge] (a6) -- (b4);
    \draw[black edge] (a6) -- (b5);
    \draw[red edge] (a6) -- (b6);
\end{tikzpicture}}
\end{center}

\noindent
Finally, the \emph{shattered} pattern~$U_n$ of size~$n$ has a side~$S$ (the \emph{shattered} set) of size~$n$ and a side~$W$ (the \emph{witness} set) of size~$2^n$.
We index the vertices of~$W$ by subsets~$I \subseteq S$, then the vertex~$w_I$ has black edges into~$I$ and red edges into~$S \setminus I$:

\begin{center}
\scalebox{.9}{\begin{tikzpicture}[scale=1.3]
    \node[vertex] (a1) at (3.5,0) {$s_1$};
    \node[vertex] (a2) at (4.5,0) {$s_2$};
    \node[vertex] (a3) at (5.5,0) {$s_3$};

    \node[vertex] (b0) at (1,-1.5) {$w_\emptyset$};
    \node[vertex] (b1) at (2,-1.5) {$w_1$};
    \node[vertex] (b2) at (3,-1.5) {$w_2$};
    \node[vertex] (b3) at (4,-1.5) {$w_3$};
    \node[vertex] (b12) at (5,-1.5) {$w_{12}$};
    \node[vertex] (b13) at (6,-1.5) {$w_{13}$};
    \node[vertex] (b23) at (7,-1.5) {$w_{23}$};
    \node[vertex] (b123) at (8,-1.5) {$w_{123}$};

    \draw[red edge] (b0) -- (a1);
    \draw[red edge] (b0) -- (a2);
    \draw[red edge] (b0) -- (a3);

    \draw[black edge] (b1) -- (a1);
    \draw[red edge] (b1) -- (a2);
    \draw[red edge] (b1) -- (a3);

    \draw[red edge] (b2) -- (a1);
    \draw[black edge] (b2) -- (a2);
    \draw[red edge] (b2) -- (a3);

    \draw[red edge] (b3) -- (a1);
    \draw[red edge] (b3) -- (a2);
    \draw[black edge] (b3) -- (a3);

    \draw[black edge] (b12) -- (a1);
    \draw[black edge] (b12) -- (a2);
    \draw[red edge] (b12) -- (a3);

    \draw[black edge] (b13) -- (a1);
    \draw[red edge] (b13) -- (a2);
    \draw[black edge] (b13) -- (a3);

    \draw[red edge] (b23) -- (a1);
    \draw[black edge] (b23) -- (a2);
    \draw[black edge] (b23) -- (a3);

    \draw[black edge] (b123) -- (a1);
    \draw[black edge] (b123) -- (a2);
    \draw[black edge] (b123) -- (a3);
\end{tikzpicture}}
\end{center}

\begin{definition}[Left-cover, left-covering number]
  Given an ordered bipartite pattern $\mathbb H$ with bipartition $(X,Y)$, a
  \emph{left-cover} is a set of vertices $C \subseteq V(\mathbb H)$ such that
  either $X \subseteq N^-(C) \cup C$ or $Y \subseteq N^-(C) \cup C$. The
  \emph{left-covering number} $\lc(\mathbb H)$ is the minimum size of a left
  cover of~$\mathbb H$.
  For an (unordered) pattern~$H$ we define its left-covering number as
  \[
    \lc(H) \defineq \max_{\H \in \pi(H)} \lc(\H).
  \]
\end{definition}

\noindent
Note that we include the covering set~$C$ itself in the cover, this is
necessary since for a given ordering of a pattern some vertices might not have
right neighbours and can therefore not be covered by left neighbourhoods.

The left-covering number of a pattern is the first important measure that
will influence the running time of the main algorithm presented later. The
second important measure relates to the number of non-isomorphic ``half''-ordered
patterns we can obtain from a bipartite pattern, that is, how many distinct
objects we find by ordering one partition. A useful tool to concretise this
notion is the following function:

\begin{definition}[Signature]
  Let~$H$ be a bipartite pattern with bipartition~$(X,Y)$ and let~$\Z$
  be an ordering of~$Z \in \{X,Y\}$. Then the \emph{signature}~$\sigma_\Z(H)$ is
  defined as the multiset
  \[
    \sigma_\Z(H) \defineq \{\!\{ \ix_\Z(N(u)) \mid u \in (X \cup Y) \setminus Z \}\!\}.
  \]
  For orderings~$\Z, \Z' \in \pi(Z)$ we define the equivalence relation
  \[
    \Z \sim_H \Z' \iff \sigma_\Z(H) = \sigma_{\Z'}(H).
  \]
\end{definition}

\begin{definition}[Half-ordering asymmetry]
  Given a bipartite pattern~$H$ with bipartition~$(X,Y)$ and a partite
  set~$Z \in \{X,Y\}$, we define the \emph{half-ordering asymmetry}~$\asym(H,Z)$
  as the number of equivalence classes under the $\sim_H$ relation
  \[
    \asym(H,Z) \defineq \left| \pi(Z) / {}\sim_H \right|.
  \]
  We further define the half-ordering asymmetry of~$H$ as
  \[
    \asym(H) \defineq \max\{ \asym(H,X), \asym(H,Y) \}.
  \]
\end{definition}

\noindent
Alternatively, $\asym(H,Z) \defineq |\{\sigma_{\Z}(H) \mid \Z \in \pi(Z) \}|$.

\section{A general pattern-finding algorithm}
\label{sec:gener-patt-find}

\noindent
We first describe a general-purpose algorithm for finding patterns in
degenerate graphs. Afterwards, we will describe more specialised algorithms
using similar ideas to find specific patterns.

\begin{theorem}\label{thm:general-algorithm}
  Let $G$ be a $d$-degenerate graph and let $H$ be a bipartite pattern with
  bipartition $(X,Y)$ where~$|X| \geq |Y|$. Then after a preprocessing time
  of $O(|X|^{\lc(H)}|H|! + d2^dn)$, we can in time
  $
    O\big(n^{\lc(H)} (4d\lc(H))^{|X|} d|X|^3 \asym(H) \big)
  $
  count how often~$H$ appears in $G$.
\end{theorem}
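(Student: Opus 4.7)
The plan is to combine a left-cover-guided placement of a small portion of $H$ with subset-dictionary-based counting of the remaining vertices. In the preprocessing phase, I compute a degeneracy ordering of $G$ in time $O(dn)$, sort every left-neighbourhood accordingly, and build a subset dictionary $D$ storing $D[S] = |\{v \in V(G) : S \subseteq N^-(v)\}|$ for every $S$ that appears as a subset of some left-neighbourhood. Iterating over each vertex $v$ and each of the at most $2^{d^-(v)} \le 2^d$ such subsets, with an $O(d)$ update to the trie per subset, yields the bound $O(d 2^d n)$. In parallel, I would enumerate all $|H|!$ orderings $\mathbb{H}$ of $H$, group them according to the at most $\asym(H)$ equivalence classes induced by the signatures $\sigma_\Z$, and for one representative per class compute a minimum left-cover $C \subseteq V(\mathbb{H})$ by brute-force search over subsets of size at most $\lc(\mathbb{H}) \le \lc(H)$, costing $O(|X|^{\lc(H)})$ per ordering.

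For the main loop, I fix one representative $\mathbb{H}$ with left-cover $C$; without loss of generality $C$ covers $X$, so $X \subseteq N^-(C) \cup C$ in $\mathbb{H}$. For each ordered tuple $\phi(C) \in V(G)^{|C|}$ of pairwise distinct vertices---of which there are at most $n^{\lc(H)}$---the image of $X \setminus C$ under any valid embedding respecting $\mathbb{H}$ must lie inside $P \defineq N^-(\phi(C)) \cup \phi(C)$, a set of size at most $(d+1)\lc(H)$. I enumerate all $|P|^{|X \setminus C|} \le (2d\lc(H))^{|X|}$ assignments $\restr{\phi}{X \setminus C}\colon X \setminus C \to P$ and test each against the pattern's black and red edges in $O(d|X|^2)$ time per assignment, using adjacency queries into the sorted left-neighbourhoods and $D$, discarding inconsistent guesses.

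For each consistent $\phi(X)$, it remains to count the extensions $\phi(y)$ for $y \in Y \setminus C$. Every such $y$ requires $N(\phi(y)) \cap \phi(X) = \phi(N_H(y) \cap X)$, so using $D$ together with Möbius inversion over subsets of $\phi(X)$ I compute, in $O(|X|^3)$ per query, the number of graph vertices realising each prescribed intersection. The product of these counts over $y \in Y \setminus C$, adjusted by a small polynomial correction to forbid vertices already in $\phi(C) \cup \phi(X)$, gives the number of embeddings extending $\phi$ under $\mathbb{H}$. Summing over representatives and dividing by the $\asym(H)$-fold redundancy across ordering classes yields the total count of appearances of $H$ in $G$; the constant $4$ in the factor $4d\lc(H)$ absorbs the overhead of the consistency test and the disjointness correction.

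The hard part will be the combinatorial bookkeeping. Each copy of $H$ is enumerated by several ordered representatives, so the $\asym(H)$-correction must be aligned precisely with the counts produced in each class. Möbius inversion must be run against subsets of $\phi(X)$ (not of $V(G)$) to turn the containment counts $D[\cdot]$ into exact-intersection counts, and the dictionary---which sees only left-neighbourhoods---must be upgraded into a full-neighbourhood oracle by adding a bounded sum over the at most $d\lc(H)$ vertices lying to the right of $\max \phi(X)$ in the degeneracy order. Squeezing all these corrections into the per-assignment cost of $O(d|X|^3)$ without inflating the claimed running time is the main technical obstacle, and it is precisely where the definition of $\asym(H)$ pays off by bounding the number of structurally distinct orderings we must actually iterate over.
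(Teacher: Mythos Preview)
Your architecture is right---guess a left-cover, enumerate candidate $X$-sides inside $N^-(C)\cup C$, and use the subset dictionary plus M\"obius inversion to count the $Y$-side---but the bookkeeping you propose for deduplication does not work, and this is precisely the delicate part.

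The central problem is your ``divide by the $\asym(H)$-fold redundancy'' step. The overcounting you incur has nothing to do with $\asym(H)$. A fixed candidate set $X\subseteq V(G)$ (playing the role of one side of the pattern) will be discovered by \emph{many} different choices of cover-tuple $\phi(C)$: any $\lc(H)$ vertices whose joint left-neighbourhood contains $X$ will do, and the number of such tuples depends on $G$, not on $H$. No uniform constant---certainly not $\asym(H)$---cancels this. The paper's proof handles this by treating $C$ purely as a device to enumerate candidate sets $X$: it stores each $X$ once in a dictionary $K$ keyed by the set itself, computes $c_{H,X}$ only the first time $X$ is seen, and at the end returns $\sum_X K[X]$. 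Your proposal has no analogue of this memoisation.

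There is a second, related gap. For a fixed candidate $X$ in $G$ (which inherits a single ordering $\mathbb X$ from the degeneracy order), the extensions to a copy of $H$ are governed by the signature $\sigma_{\mathbb X}(H)$---and as the bijection between $H$'s $X$-side and $X$ varies, \emph{several} distinct signatures arise. The paper therefore \emph{sums} over all signatures $\sigma\in\mathcal X$ to form
\[
  c_{H,X}=\sum_{\sigma\in\mathcal X}\ \prod_{A^{(k)}\in\sigma}\binom{Q_X[\mathbb X[A]]}{k},
\]
where the binomial coefficients account for the multiplicity $k$ of each index set in the multiset $\sigma$. Your ``product of these counts over $y\in Y\setminus C$'' misses both points: it does not sum over signatures, and it overcounts whenever two vertices of $Y$ demand the same intersection with $X$ (you need $\binom{Q_X[\cdot]}{k}$, not $Q_X[\cdot]^k$). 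So $\asym(H)$ enters the running time as the number of signatures you must \emph{iterate over per candidate $X$}, not as a global divisor.
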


\noindent
The main ingredient of our algorithm will be the following data structure:

\begin{theorem}\label{thm:data-structure}
  Let $\G$ be an ordered graph on~$n$ vertices with degeneracy~$d$. After a
  preprocessing time of~$O(d2^d n)$, we can, for any given~$S \subseteq V(G)$,
  compute a subset dictionary~$\DataStruct_S$ in time~$O(|S| 2^{|S|} + d|S|^2)$ which
  for any $X \subseteq S \subseteq V(G)$ answers the query
  \[
    \DataStruct_S[X] \defineq \big| \{ v \in G \mid S \cap N(v) = X \} \big|
  \]
  in time~$O(|X|)$.
\end{theorem}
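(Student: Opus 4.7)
The plan is to precompute, for every set $T$ arising as a subset of some vertex's left-neighbourhood, the count $c[T] \defineq |\{v \in V(G) \mid T \subseteq N^-(v)\}|$. Since the degeneracy ordering guarantees $|N^-(v)| \leq d$ for every $v$, I would iterate over all $v$ and all $2^d$ subsets $T \subseteq N^-(v)$, incrementing $c[T]$ in a prefix-trie subset dictionary. Each of the $O(2^d n)$ updates walks at most $d$ trie nodes, yielding the promised preprocessing time $O(d 2^d n)$; sorting each left-neighbourhood once during this pass provides the canonical key order the trie requires.

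Given a query set $S$, the goal is to compute $\DataStruct_S[X] = |\{v \mid N(v) \cap S = X\}|$ for every $X \subseteq S$. The central obstacle is that $N(v) = N^-(v) \cup N^+(v)$ while $c$ only records left-neighbourhood information; a vertex can have arbitrarily many right-neighbours in $S$ that are invisible to $c$. I would dodge this by isolating a ``special'' vertex set
\[
  W \defineq S \,\cup \bigcup_{s \in S} N^-(s),
\]
which has size at most $(d+1)|S|$ and consists precisely of those $v$ satisfying $v \in S$ or $N^+(v) \cap S \neq \emptyset$. For every $v \notin W$ we have $N(v) \cap S = N^-(v) \cap S$, so these vertices can be handled purely through the preprocessed counts.

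The query phase then proceeds in three steps. First, for each $v \in W$ iterate through $S$ and use $O(1)$ adjacency and order comparisons to determine both $N(v) \cap S$ and $N^-(v) \cap S$, tallying
\[
  h^+[X] \defineq |\{v \in W \mid N(v) \cap S = X\}|, \quad h^-[X] \defineq |\{v \in W \mid N^-(v) \cap S = X\}|,
\]
at a cost of $O(|W| \cdot |S|) = O(d|S|^2)$. Second, look up $c[T]$ in the preprocessed trie for all $2^{|S|}$ subsets $T \subseteq S$ (keys absent from the trie simply return $0$, as is the case whenever $|T| > d$) and apply fast Möbius inversion over the lattice $2^S$ to the identity
\[
  c[T] = \sum_{T \subseteq T' \subseteq S} c_S[T'] \qquad (T \subseteq S),
\]
which holds because $T \subseteq S$ makes $T \subseteq N^-(v)$ equivalent to $T \subseteq N^-(v) \cap S$; this recovers $c_S[T] \defineq |\{v \mid N^-(v) \cap S = T\}|$ for every $T \subseteq S$ in $O(|S| 2^{|S|})$ time. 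Third, assemble
\[
  \DataStruct_S[X] = c_S[X] - h^-[X] + h^+[X] \qquad (X \subseteq S),
\]
storing each value in a new prefix trie so that subsequent lookups for an $X$ run in $O(|X|)$.

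The main obstacle throughout is the asymmetry between $N^-$ and $N^+$: preprocessing only naturally captures the former. The saving grace is the degeneracy bound, which caps $|W|$ at $O(d|S|)$ and lets brute-force handling of the special vertices fit inside the $O(d|S|^2)$ budget; subtracting their $N^-$-contribution from the trie-derived $c_S$ and adding back their true $N$-contribution $h^+$ then yields the correct counts for every $X \subseteq S$. Summing all costs gives the promised query-construction time $O(|S| 2^{|S|} + d|S|^2)$.
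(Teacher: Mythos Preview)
Your proposal is correct and follows essentially the same route as the paper: precompute the left-neighbourhood superset counts $R$ (your $c$), obtain the exact left-intersection counts $\hat Q_S$ (your $c_S$) by M\"obius inversion over $2^S$, and then repair the contribution of the vertices in $N^-(S)$ whose full neighbourhood in $S$ differs from their left neighbourhood in $S$. The only cosmetic difference is that you additionally place $S$ itself into the special set $W$ and tabulate $h^-,h^+$ explicitly, whereas the paper performs the same correction in place over $N^-(S)$; this is harmless since for $v\in S\setminus N^-(S)$ the two contributions cancel.
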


\begin{lemma}\label{lemma:R}
  Let $\G$ be an ordered graph with degeneracy $d$. Then in time $O(d2^d n)$
  we can compute a subset dictionary $R$ over~$V(G)$ which for any $X \subseteq
  V(G)$ answers the query
  \[
    R[X] \defineq \big| \{ v \in G \mid X \subseteq N^-(v) \} \big|
  \]
  in time~$O(|X|)$.
\end{lemma}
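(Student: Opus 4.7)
The plan is to exploit the defining property of the degeneracy ordering: every vertex $v$ has $|N^-(v)| \leq d$. Consequently, the only sets $X$ for which $R[X]$ is nonzero are subsets of some left-neighbourhood, and in particular $|X| \leq d$. This turns the construction into a straightforward enumeration over left-neighbourhoods.

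First I would compute a degeneracy ordering of $G$ in time $O(dn)$ and sort each left-neighbourhood $N^-(v)$ according to this global order. Initialise an empty subset dictionary $R$ over $V(G)$ (so every key not explicitly inserted has value $0$ by convention). Then for each vertex $v \in G$ and for each of the at most $2^{|N^-(v)|} \leq 2^d$ subsets $X \subseteq N^-(v)$, increment $R[X]$ by one. Each such $X$ inherits the global order from $N^-(v)$, so it is already in canonical form and the trie update costs $O(|X|) = O(d)$. The total preprocessing cost is thus bounded by $\sum_{v \in G} 2^d \cdot O(d) = O(d 2^d n)$, matching the claim.

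For correctness, observe that by construction $R[X]$ counts exactly one contribution from each vertex $v$ such that $X \subseteq N^-(v)$, so $R[X] = |\{ v \in G \mid X \subseteq N^-(v)\}|$ holds for every $X$ with $|X| \leq d$. If $|X| > d$ then no $v$ satisfies $X \subseteq N^-(v)$ since $|N^-(v)| \leq d$, so the desired count is $0$, which agrees with the default value returned by the dictionary for missing keys. A query $R[X]$ is answered by a single prefix-trie lookup, which requires $O(|X|)$ time given that $X$ is presented in the canonical order.

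The only potentially subtle step is ensuring that the $2^d$ subsets enumerated per vertex are produced already in the global canonical order so that the trie operations are $O(|X|)$ rather than $O(|X|\log|X|)$. This is handled once and for all by sorting $N^-(v)$ up front: any standard subset-enumeration (e.g.\ iterating over bitmasks of $N^-(v)$ interpreted in that order) yields subsets in canonical form. No other step carries nontrivial difficulty, so I expect the whole argument to be a short, direct verification once the degeneracy bound on $|N^-(v)|$ is highlighted.
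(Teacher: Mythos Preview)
Your proposal is correct and follows essentially the same approach as the paper: iterate over all vertices and increment $R[X]$ for every subset $X$ of the left-neighbourhood, bounding the cost by $O(d2^d n)$ via $|N^-(v)| \leq d$. The only minor remark is that the lemma already hands you an ordered graph $\G$, so you need not recompute a degeneracy ordering; otherwise your argument (including the explicit treatment of the $|X| > d$ case and the canonical-order discussion) is a slightly more detailed version of the paper's proof.
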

\begin{proof}
  Given $\G$ as input, we compute $R$ as follows:

  \begin{algorithm}[H]
    Initialize $R$ as an empty trie storing integers\;

    \For{$u \in \mathbb G$}{
      \For{$X \subseteq N^-(u)$}{
        $R[X] \leftarrow R[X] + 1$ \quad // Non-existing keys are treated as zero
      }
    }
    \Return $R$\;
  \end{algorithm}

  \noindent
  Note that every update of the data structure with key~$X$ takes time
  $O(|X|)$, since $|X| \leq d$ it follows that the total initialisation time
  is bounded by $O(d2^d n)$.
\end{proof}

\begin{lemma}\label{lemma:count}
  Let $\G$ be an ordered graph with degeneracy $d$ and let $S \subseteq V(G)$. If we assume the subset dictionary $R$ of \Cref{lemma:R} is given, we can
  construct in time $O(|S| 2^{|S|} + d|S|^2)$ a subset dictionary
  $\DataStruct_S$ over~$S$ which for $X \subseteq S$ answer the query
  \[
    \DataStruct_S[X] \defineq \big|\{v \in G \mid S \cap N(v) = X\}\big|
  \]
  in time~$O(|X|)$.
\end{lemma}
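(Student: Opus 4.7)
My plan is a two-stage construction: first a Möbius inversion from $R$ that places almost every vertex into the correct cell of $\DataStruct_S$, then a correction pass for the (few) vertices whose right-neighbourhood meets $S$.

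\textbf{Stage~1 (Möbius from~$R$).} I will compute $P^-[Y] \defineq |\{v \in V(G) : N^-(v) \cap S = Y\}|$ for every $Y \subseteq S$. For any $X \subseteq S$, the condition $X \subseteq N^-(v)$ is equivalent to $X \subseteq N^-(v) \cap S$, so $R[X] = \sum_{Y \supseteq X,\,Y \subseteq S} P^-[Y]$, and a fast subset Möbius (zeta) inversion recovers $P^-$. Reading the $2^{|S|}$ values of $R$ out of the trie, running the inversion, and initialising $\DataStruct_S[Y] \defineq P^-[Y]$ for each $Y \subseteq S$ each cost $O(|S|\,2^{|S|})$.

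\textbf{Stage~2 (correction).} After Stage~1, $\DataStruct_S[X]$ counts vertices by $N^-(v) \cap S$, which equals $S \cap N(v)$ exactly when $N^+(v) \cap S = \emptyset$. The only vertices needing correction therefore lie in $V^* \defineq \bigcup_{s \in S} N^-(s)$, a set of size at most $d|S|$. A single $O(d|S|)$ sweep over the pairs $(s,v)$ with $v \in N^-(s)$ produces, for each $v \in V^*$, the list $L[v] = N^+(v) \cap S$. For each such $v$ I then compute $Y_1 \defineq N^-(v) \cap S$ in $O(|S|)$ by testing the adjacency of $v$ against each $s \in S$, set $Y_2 \defineq L[v]$, and apply two trie updates: decrement $\DataStruct_S[Y_1]$ (removing the miscount from Stage~1) and increment $\DataStruct_S[Y_1 \cup Y_2]$ (placing $v$ correctly). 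Each update is $O(|S|)$, so Stage~2 runs in $O(d|S|^2)$ total, giving the claimed $O(|S|\,2^{|S|} + d|S|^2)$ construction time; queries resolve in $O(|X|)$ via a direct trie lookup.

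Correctness reduces to a cell-by-cell tally: every $v \notin V^*$ contributes $+1$ only in Stage~1, to the correct cell $\DataStruct_S[N^-(v) \cap S] = \DataStruct_S[S \cap N(v)]$; every $v \in V^*$ contributes $+1$ in Stage~1 to the wrong cell $\DataStruct_S[N^-(v)\cap S]$, which Stage~2 cancels and replaces by $+1$ in $\DataStruct_S[(N^-(v)\cap S) \cup (N^+(v)\cap S)] = \DataStruct_S[S \cap N(v)]$.

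\textbf{The main obstacle} is staying within the $d|S|^2$ budget for Stage~2: a naive computation of $Y_1$ by scanning $N^-(v)$ costs $O(d)$ per vertex and pushes the total to $O(d^2|S|)$, which breaks the bound when $d > |S|$. The fix is to scan $S$ rather than $N^-(v)$, which requires an $O(1)$-time adjacency primitive---an auxiliary edge hash-set, built in $O(dn) \leq O(d2^d n)$ time as part of the preprocessing of \Cref{lemma:R}.
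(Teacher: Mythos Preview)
Your proposal is correct and follows essentially the same two-stage route as the paper: first Möbius-invert $R$ over $2^S$ to obtain the counts $|\{v : N^-(v)\cap S = Y\}|$, then correct the at most $d|S|$ vertices in $N^-(S)$ by moving each from its $N^-(v)\cap S$ bucket to its $N(v)\cap S$ bucket. The only substantive addition is your explicit discussion of how to compute $N^-(v)\cap S$ within the $O(|S|)$-per-vertex budget via an $O(1)$ adjacency oracle; the paper simply asserts the $O(d|S|^2)$ bound for the correction loop without spelling this out.
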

\begin{proof}
  We first construct an auxiliary subset dictionary $\hat \DataStruct$ which
  for $X \subseteq S$ answers the query
  \[
    \hat \DataStruct_S[X] \defineq \big|\{v \in G \mid S \cap N^-(v) = X\}\big|
  \]
  in time~$O(|X|)$. We first prove the following claim which implies that
  $\hat \DataStruct_S$ is the (upwards) \Mobius inversion of $R$ over~$S$ and hence can
  be computed in time $O(|S| 2^{|S|})$ using Yate's algorithm~\cite{yates_design_1937,knuth_art_2014,bjorklund_set_2009}.
  \begin{claim}
  $
    \big|\{v \in G \mid S \cap N^-(v) = X\}\big| = \sum_{X \subseteq Y \subseteq S}  (-1)^{|Y\setminus X|} R[Y],
  $
  \end{claim}
  \begin{proof}
  First consider $v \not \geq_\G X$. Then $X$ cannot be contained in $N^-(v)$
  and therefore $v$ does not contribute to the left-hand side. Note that $v$
  is not counted by $R[Y]$ for any $Y \supseteq X$, therefore $v$ does not
  contribute to the right-hand side.

  Consider therefore $v \geq_\mathbb G X$. First, assume that $S \cap N^-(v) =
  X$ and therefore $v$ contributes to the left-hand side. Then $v$ is counted
  on the right-hand side exactly once by the term $R[X]$ which has a positive
  sign.

  Consider now $v$ with $S \cap N^-(v) \neq X$. If $X \not \subseteq N^-(v)$,
  then $v$ does not contribute to the left-hands side and it is not counted by
  any term $R[Y]$, $Y \supseteq X$ on the right-hand side. We are therefore
  left with vertices $v$ where $I \defineq S \cap N^-(v)$ satisfies $X \subset I$.
  Note that $I$ is counted by every term $R[Y]$ with $X \subseteq Y \subseteq
  I$. Since
  \[
    \sum_{X \subseteq Y \subseteq I} (-1)^{(Y\setminus X)} = \sum_{0 \leq k \leq |I \setminus X|} (-1)^k {|I\setminus X| \choose k} = 0
  \]
  we conclude that these counts of~$v$ cancel out and contribute a sum-total
  of zero to the right-hand side. This covers all cases and we conclude that
  the claim holds.
  \end{proof}

  \noindent
  It remains to be shown how the query $\DataStruct_S[X]$ can be computed using $\hat
  \DataStruct_S[X]$. To this end, consider a vertex~$v \in G$ where $S \cap N(v) \neq S
  \cap N^-(v)$ as these contribute to $\hat \DataStruct_S[X]$ but must not be counted by
  $\DataStruct_S[X]$. Note that any such vertex must be contained in $N^-(S)$ since $v$
  has at least one right-neighbour in~$S$. Accordingly, we apply the following
  correction to $\hat \DataStruct_S[X]$: \looseness-1

  \begin{algorithm}[H]
    \DontPrintSemicolon
    Let $\DataStruct_S = \hat \DataStruct_S$\;
    \For{$u \in N^-(S)$}{
      $\DataStruct_S[N^-(u) \cap S] \leftarrow \DataStruct_S[ N^-(u) \cap S] - 1$\;
      $\DataStruct_S[N(u) \cap S] \leftarrow \DataStruct_S[N(u) \cap S] + 1$\;
    }
  \end{algorithm}
  \noindent
  This correction takes time~$O(d|S|^2)$.
\end{proof}

\noindent
We are now ready to describe the pattern-counting algorithm.
\begin{proof}[Proof of \Cref{thm:general-algorithm}]
  The problem is trivial for~$|X| = 1$ since then the pattern is either a
  single edge or anti-edge. Thus assume~$|X| \geq 2$ in the following, in particular for the running time calculations.

  We first compute the left-covering number~$\lc(H)$ by simply brute-forcing
  all orderings of~$H$ in time~$O(|H|! \cdot \max\{|X|,|Y|\}^{\lc(H)}) = O
  (|H|! |X|^{\lc(H)})$. At the same time, whenever we find that a specific
  ordering~$\H$ has a minimal left-covering of~$X$, then we add the
  signature~$\sigma_\X(H)$ with $\X \defineq \H[X]$ to a collection~$\mathcal X$.
  Similarly, if we find that a minimal left-covering in~$\H$ covers~$Y$ we
  add the signature~$\sigma_\Y(H)$ with~$\Y \defineq \H[Y]$ to a
  collection~$\mathcal Y$. We will later use that~$|\mathcal X| \leq \asym(H,X)$
  and~$|\mathcal Y| \leq \asym(H,Y)$.

  We now compute an ordering~$\G$ for~$G$ of degeneracy~$d$ in time~$O(d n)$, sort the left-neighbourhoods in time $O(d \log d \cdot n)$ time and
  compute the data structure~$R$ as per \Cref{lemma:R} in time~$O
  (d2^d n)$. If we want to compute the number of times~$H$ appears in~$G$, we
  further need to initialise a subset dictionary~$K$.

  We now iterate through all subsets~$C \subseteq V(G)$ of size~$\lc(H)$ and
  for each such set we iterate through all subsets~$Z \subseteq N_\G^-
  (C) \cup C$ of size~$|X|$ or~$|Y|$,
  in total this takes time~$O(n^{\lc(H)} ((d+1)\lc(H))^{|X|})$. We describe the remainder of the algorithm for a set~$X =
  Z$ of size~$|X|$, the procedure for a set~$Y$ works analogously. Let $\X$ be the ordering of~$X$ in~$\G$.

  To verify that~$X$ can be completed into a pattern~$H$ in~$G$, we compute the data structure~$\DataStruct_{X}$ in time~$O(|X|2^{|X|} + d|X|^2)$ as per
  \Cref {thm:data-structure}. To check whether~$H$ exists
  in~$G$, we iterate through all signatures~$\sigma \in \mathcal X$
  and test whether~$\DataStruct_{X}[\X[A]] > 0$ for all index sets~$A \in \sigma$, this takes time~$O(|\mathcal X||X||Y|)$, in total the verification step for~$X$ takes time
  \[
    O\big((|X| 2^{|X|} + d|X|^2) \cdot |\mathcal X| |X| |Y| \big)
    = O\big(d|X|^3 2^{|X|} \asym(H) \big)
  \]
  where we used that~$|X| \geq |Y|$ and~$|X| \geq 2$. This bound also holds for checking~$Y$ since~$|\mathcal X| + |\mathcal Y| \leq 2\asym(H)$.
  Finally, if we exhaust all orderings of~$H$ without finding the pattern, we report that it does not exist in~$G$.

  To \emph{count} in how many ways~$X$ can be extended into the pattern~$H$ in~$G$, we compute
  \[
    c_{H,X}\defineq \sum_{\sigma \in \mathcal X} \prod_{A^{(k)} \in \sigma} {\DataStruct_{X}[\X[A]] \choose k}
  \]
  where~$k$ denotes the multiplicity of~$A$ in the multiset~$\sigma$.
  Note, however, that we have to take care not to double-count the contribution of~$X$ to the overall count as we might encounter the set~$X$ multiple times. To that end, we record the intermediate result by setting~$K[X] \defineq c_{H, X}$ and we forgo the above computation if~$X$ exists already as
  a key in~$K$. The computation of~$c_{H, X}$ and this additional book keeping takes time~$O(|X| + |\mathcal X||X||Y|)$, in total we arrive at the same
  running time $O\big(d|X|^3 2^{|X|} \asym(H) \big)$ like for the decision variant.
  After exhausting all orderings of~$H$ we report back the number of times~$H$ appears in~$G$ as the sum of all entries of~$K$.

  The total running time  of either variant of the algorithm is, as claimed,
  \begin{align*}
    & O\Big(|X|^{\lc(H)}|H|! + d2^dn + dn + n^{\lc(H)}\big((d+1)\lc(H)\big)^{|X|} \cdot  d|X|^3 2^{|X|} \asym(H) \Big) \\
    &= O\Big(|X|^{\lc(H)}|H|! + d2^dn + n^{\lc(H)} (4d\lc(H))^{|X|} d|X|^3 \asym(H) \Big). \qedhere
  \end{align*}
\end{proof}

\section{Concrete applications}
\label{sec:concr-appl}

\subsection{Finding bicliques and co-matchings}

We note that~$\lc(K_{t,t}) = 1$ and $\asym(K_{t,t}) = 1$, therefore the application of \Cref{thm:general-algorithm} gives the following:

\begin{corollary}
  Let~$G$ be a $d$-degenerate graph. Then we can compute the number of
  biclique patterns~$K_{s,t}$ ($s \geq t$) in time~$O\big(s \cdot (2s)! + d2^dn + n (4d)^s ds^3 \big)$.
\end{corollary}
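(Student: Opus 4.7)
My plan is to apply \Cref{thm:general-algorithm} directly to the pattern $H = K_{s,t}$ with the natural bipartition $(X,Y)$, where $|X| = s \geq t = |Y|$. The only non-mechanical work will be to pin down the two structural constants $\lc(K_{s,t})$ and $\asym(K_{s,t})$; the discussion preceding the corollary already records these values for the square case $K_{t,t}$, and I expect the same arguments to carry over verbatim to arbitrary $s \geq t$.

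To see that $\lc(K_{s,t}) = 1$, I would fix any ordering $\H \in \pi(H)$ and single out its $<_\H$-maximum vertex $r$. Since $K_{s,t}$ is complete bipartite, $r$ is adjacent to every vertex of the opposite partite set, and all of those vertices lie to the left of $r$. Hence $\{r\}$ is a left-cover of the side not containing $r$, giving $\lc(\H) \leq 1$ for every ordering. A nonempty partite set cannot be covered by the empty set, so in fact $\lc(K_{s,t}) = 1$.

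The argument for $\asym(K_{s,t}) = 1$ is equally short. Fixing a side $Z \in \{X,Y\}$, every vertex of the opposite side has neighbourhood exactly $Z$, so for any $\Z \in \pi(Z)$ each such vertex contributes the full index set $[|Z|]$ to $\sigma_\Z(K_{s,t})$. The signature is therefore the multiset containing $[|Z|]$ with multiplicity $|V(H) \setminus Z|$, independent of the chosen ordering $\Z$. Both $\asym(K_{s,t}, X)$ and $\asym(K_{s,t}, Y)$ are thus equal to $1$, and hence $\asym(K_{s,t}) = 1$.

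It then remains to substitute $\lc(H) = 1$, $\asym(H) = 1$, $|X| = s$, and $|H| = s + t \leq 2s$ into the running time given by \Cref{thm:general-algorithm}. The preprocessing term $O(|X|^{\lc(H)}|H|! + d2^d n)$ collapses to $O(s \cdot (2s)! + d2^d n)$, while the counting term $O(n^{\lc(H)}(4d\lc(H))^{|X|} d|X|^3 \asym(H))$ collapses to $O(n (4d)^s d s^3)$. Their sum matches the bound claimed in the corollary. I foresee no real obstacle here; the heart of the argument is the pair of structural observations above, both of which follow from the total symmetry of the complete bipartite graph.
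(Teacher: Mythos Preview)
Your proposal is correct and follows exactly the approach the paper takes: it simply records $\lc(K_{s,t})=1$ and $\asym(K_{s,t})=1$ and plugs these into \Cref{thm:general-algorithm}. You have spelled out the two structural observations in slightly more detail than the paper (which states them only for $K_{t,t}$ and without justification), but the argument is identical in substance.
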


\noindent
Let~$\overbar M_t$ be a co-matching on $2t$ vertices. We will assume in the following that the partite sets of~$\overbar M_t$ are~$X \defineq (x_1,\ldots,x_t)$ and $Y \defineq (y_1,\ldots,y_t)$ so that the edges~$x_iy_i$ for~$i \in [t]$ are forbidden.

\begin{lemma}
    $\lc(\overbar M_t) = 2$ and $\asym(\overbar M_t) = 1$.
\end{lemma}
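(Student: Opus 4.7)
The plan is to prove the two claims separately by direct computation; no structural machinery is required.

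For $\asym(\overbar M_t) = 1$, I will compute $\sigma_\X(\overbar M_t)$ explicitly for an arbitrary ordering $\X$ of $X$ and show it is independent of $\X$. Each $y_j \in Y$ has black neighbourhood $\{x_i : i \neq j\}$, so $\ix_\X(N(y_j)) = [t] \setminus \{\ix_\X(x_j)\}$. Since $j \mapsto \ix_\X(x_j)$ is a bijection from $[t]$ to $[t]$, the multiset
\[
  \sigma_\X(\overbar M_t) = \{\!\{ [t] \setminus \{k\} : k \in [t] \}\!\}
\]
is the same for every ordering $\X$. The symmetric argument yields $\asym(\overbar M_t, Y) = 1$, hence $\asym(\overbar M_t) = 1$.

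For $\lc(\overbar M_t) = 2$, I establish a matching lower and upper bound that hold for every ordering $\H$ (assuming $t \geq 2$). The lower bound $\lc(\H) \geq 2$ is a direct case check on a hypothetical singleton cover $C = \{c\}$: if $c \in X$, the black relation has no edges inside $X$, so $C \cup N^-(C)$ meets $X$ only in $\{c\}$ and misses $t-1 \geq 1$ vertices; if $c = y_j$, then $N^-(c) \subseteq \{x_i : i \neq j\}$ and $x_j$ is missed. The same reasoning applies to covering $Y$, so no singleton is a left-cover.

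The upper bound $\lc(\H) \leq 2$ is the key step but still short. Let $v$ be the $<_\H$-maximum vertex and, without loss of generality, suppose $v \in X$, say $v = x_m$ (the case $v \in Y$ is symmetric). Every $y_j$ with $j \neq m$ precedes $x_m$ and is a black neighbour of $x_m$, so $N^-(x_m) \supseteq Y \setminus \{y_m\}$. Setting $C \defineq \{x_m, y_m\}$ therefore gives $Y \subseteq C \cup N^-(C)$, so $C$ is a left-cover of $Y$ of size two. The only subtlety is remembering that a left-cover is allowed to pick vertices from either partite side, which is precisely what makes this "last vertex" argument work. Combining the two bounds yields $\lc(\overbar M_t) = 2$.
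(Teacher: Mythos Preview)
Your proof is correct and follows essentially the same approach as the paper: for the upper bound on $\lc$, both you and the paper take the last vertex $z$ in the ordering, observe that $N^-(z)$ contains all but one vertex of the opposite side, and add that missing vertex to form a size-two cover; for $\asym$, both compute the signature explicitly as $\binom{[t]}{t-1}$ independently of the ordering. Your write-up is in fact slightly more complete than the paper's, since you supply the lower bound $\lc(\H)\geq 2$ (which the paper omits) and make the necessary assumption $t\geq 2$ explicit.
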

\begin{proof}
  Let~$\bar {\mathbb M}_t$ be an ordering of~$\overbar M_t$ and let~$z$ be the last
  vertex in that order. Then~$N^-(z)$ covers all vertices of one partite set except one vertex~$z'$. Thus $\{z,z'\}$ is a left-cover of~$\bar {\mathbb M}_t$.

  To determine the half-ordering asymmetry, note that for \emph{every}
  ordering $\Z$ of~$Z \in \{ X,Y \}$ the signature~$\sigma_\Z(\overbar M_t)$
  is simply the set ${[t] \choose t-1}$, so the total number of signatures is one.
\end{proof}

\begin{corollary}
  Let~$G$ be a $d$-degenerate graph. Then we can compute the number of
  co-matching patterns~$\overbar M_t$ in time~$O\big(t^2 (2t)! + d2^dn + n^2 (8d)^t dt^3 \big)$.
\end{corollary}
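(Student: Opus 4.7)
The corollary is a direct instantiation of \Cref{thm:general-algorithm} using the parameters computed in the preceding lemma, so the proof plan is essentially a verification that the arithmetic works out.

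My plan is to apply \Cref{thm:general-algorithm} to the bipartite pattern $H = \overbar M_t$, which has bipartition $(X,Y)$ with $|X| = |Y| = t$ and $|H| = 2t$. From the preceding lemma we already have $\lc(\overbar M_t) = 2$ and $\asym(\overbar M_t) = 1$, so these quantities can be substituted directly into the running-time bound. No additional structural analysis of co-matchings is required; the work has been done in the lemma and in the general theorem.

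Next I would substitute. The preprocessing term $|X|^{\lc(H)}|H|!$ becomes $t^{2}(2t)!$, and the term $d 2^d n$ from the data-structure construction of \Cref{thm:data-structure} is kept as is. For the main term, I substitute $\lc(H) = 2$, $|X| = t$, $\asym(H) = 1$ into
\[
  n^{\lc(H)} (4d\lc(H))^{|X|} d|X|^3 \asym(H),
\]
which yields $n^{2} (8d)^{t} d t^{3}$. Summing the three contributions gives exactly the claimed bound $O\!\big(t^{2}(2t)! + d 2^d n + n^{2}(8d)^t d t^{3}\big)$.

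The only subtlety worth flagging is that \Cref{thm:general-algorithm} requires $|X| \geq |Y|$, which is automatically satisfied here since the two sides of a co-matching have equal size, so either may play the role of $X$. Additionally, the hypothesis $|X| \geq 2$ used inside the proof of the general theorem is fine because the case $t = 1$ (where $\overbar M_1$ is just a single anti-edge) is trivial and can be handled separately or absorbed into the constants. There is no genuine obstacle in the argument: all the real work has been encapsulated in \Cref{thm:general-algorithm} and in the computation of $\lc$ and $\asym$ for $\overbar M_t$.
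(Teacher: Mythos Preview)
Your proposal is correct and matches the paper's approach exactly: the paper states the corollary as an immediate consequence of the preceding lemma and \Cref{thm:general-algorithm}, and your substitution of $\lc(\overbar M_t)=2$, $\asym(\overbar M_t)=1$, $|X|=t$, $|H|=2t$ into the running-time bound reproduces the claimed expression term by term.
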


\subsection{Finding shattered sets}

\noindent
A direct application of \Cref{thm:general-algorithm} to
locate a shattered pattern~$U_t$ is unsatisfactory as the
running time will include a factor of~$n^t$ since~$\lc(U_t) = t$. By the following observation, we can bound~$t$ by the degeneracy of the graph, but we can greatly improve the running time by further adjusting the algorithm.

\begin{observation}
  Let~$G$ be a $d$-degenerate graph. Then $\vc(G) \leq d+1$.
\end{observation}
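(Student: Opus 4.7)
The plan is to extract a specific dense subgraph from any shattered set and apply the standard fact that degeneracy lower-bounds the minimum degree of every subgraph. Concretely, assume that $S \subseteq V(G)$ is shattered with $|S| = t$. For each $s \in S$, the shattering condition supplies a witness vertex $v_s \in V(G)$ such that $N(v_s) \cap S = S \setminus \{s\}$. I would focus only on these~$t$ ``near-universal'' witnesses (one per element of~$S$), ignoring all others, and build the subgraph $H$ of $G$ induced on $S \cup \{v_s : s \in S\}$.

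First I would verify that this set has exactly $2t$ vertices. The $v_s$ are pairwise distinct because they realise distinct neighbourhoods in $S$. Moreover, no $v_s$ can coincide with any $s' \in S$: if $v_s = s'$ then $s' \in S \setminus \{s\}$ would force $s' \in N(s')$, which is impossible since $G$ has no self-loops. Hence the sides $S$ and $\{v_s\}_{s \in S}$ are disjoint and $|V(H)| = 2t$.

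Next I would compute the minimum degree of $H$. Every $s \in S$ is adjacent (in $G$, hence in $H$) to each $v_{s'}$ with $s' \neq s$, because $s \in S \setminus \{s'\} = N(v_{s'}) \cap S$; so $\deg_H(s) \geq t - 1$. Symmetrically, $v_s$ has $S \setminus \{s\}$ as neighbours on the $S$-side of~$H$, giving $\deg_H(v_s) \geq t - 1$. Thus the minimum degree of $H$ is at least $t - 1$.

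Finally, since degeneracy is hereditary and equivalent to the existence in every subgraph of a vertex of degree at most $d$, the subgraph $H$ must have some vertex of degree at most $d$. Combined with the bound $\delta(H) \geq t - 1$, this yields $t - 1 \leq d$, i.e.\ $t \leq d + 1$. Taking the maximum over all shattered $S$ gives $\vc(G) \leq d + 1$. There is no real obstacle here; the only subtle point is the disjointness check between~$S$ and the chosen witnesses, which would otherwise threaten the degree bound in~$H$.
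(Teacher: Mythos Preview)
Your approach coincides with the paper's: both take the witnesses for the $(|S|-1)$-element subsets of~$S$ and observe that the subgraph induced on $S$ together with these witnesses has minimum degree at least $|S|-1$, forcing $|S|-1 \leq d$. One small quibble: your disjointness argument rules out $v_s = s'$ for $s' \neq s$ but overlooks the case $v_s = s$ (which occurs precisely when $N(s)\cap S = S\setminus\{s\}$); fortunately this does not affect the minimum-degree count, so your conclusion stands---the paper's terser version simply never asserts disjointness or that $|V(H)|=2t$.
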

\begin{proof}
  Assume~$S \subseteq V(G)$ is shattered by~$W \subseteq V(G)$, with $S = |\vc(G)|$. Let~$W'
  \subseteq W$ be those witnesses that have $|S|-1$ neighbours in~$S$.
  Then~$G[W' \cup S]$ induces a graph of  minimum degree~$|S|-1$ and we must
  have that $|S|-1 \leq d$ and accordingly $\vc(G) = |S| \leq d+1$.
\end{proof}

\noindent
The core observation that allows further improvements is that many orderings
of $U_{d+1}$ have degeneracy \emph{larger} than~$d$ and can therefore not appear in a $d$-degenerate graph. In particular, the ordering in which all witnesses of~$U_{d+1}$ appear before the shattered set has degeneracy~$2^{d+1}$ and can therefore be ruled out. We refine this idea further in the following lemma.

\begin{lemma}\label{lemma:shattered-small-lc}
  Let~$\G$ be a $d$-degenerate ordering of a graph~$G$. Let~$G$ contain the
  shattered pattern~$U_t$ and let~$\mathbb U_t \defineq \G[U_t]$ be its ordering.
  Then~$\lc(\mathbb U_t) \leq \ceil{\log d+1}$. Specifically, we either have that
  $t \leq \ceil{\log d + 1}$ or that~$\mathbb U_t$ can be covered by~$\ceil{\log d + 1}$
  witness vertices.
\end{lemma}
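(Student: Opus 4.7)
The plan is to distinguish the two cases from the ``specifically'' clause. If $t \le \lceil \log d + 1 \rceil$, the shattered set $S$ itself is a left-cover (since trivially $S \subseteq S \cup N^-(S)$), so $\lc(\mathbb U_t) \le |S| = t \le \lceil \log d + 1 \rceil$.

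For $t > \lceil \log d + 1 \rceil$ I would first prove the following single-witness coverability claim: for any $T \subseteq S$ with $2^{\,t - |T|} > d$ there exists a witness $w_I$ with $T \subseteq I$ positioned strictly to the right of $\max_\G(T)$, so $T \subseteq N^-(w_I)$. The proof is a pigeonhole count on the $2^{t-|T|}$ witnesses whose $I$ contains $T$: every such witness is black-adjacent to $\max_\G(T)$, but the left-degree of $\max_\G(T)$ in $\G$ is at most $d$, so at most $d$ of these witnesses lie in $N^-(\max_\G(T))$, leaving $\ge 2^{t-|T|} - d > 0$ of them strictly to its right. In particular, because $t > \lceil \log d + 1 \rceil$ gives $2^{t-1} > d$, every singleton $\{s\} \subseteq S$ is coverable by some witness.

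The witness cover is then constructed greedily by halving. Maintain an uncovered set $T_i$ with $T_0 = S$; at step $i$, pick $T' \subseteq T_{i-1}$ with $|T'| = \lceil |T_{i-1}|/2 \rceil$ satisfying $2^{t-|T'|} > d$, cover it with a single witness supplied by the claim above, and continue with $T_{i+1} = T_i \setminus T'$. By the Observation preceding the lemma, $\vc(G) \le d+1$ and hence $t \le d+1$; a short calculation then gives $\lfloor \log t \rfloor + 1 \le \lceil \log d + 1 \rceil$, which bounds the total number of witnesses produced by the halving schedule.

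The main obstacle is the feasibility of the halving step in the boundary regime where $|T_{i-1}|$ approaches $t - \log d$ and the inequality $2^{t-|T'|} > d$ becomes tight. Handling this likely requires replacing a few halving steps by smaller ``shrink-by-one'' steps that invoke only the singleton coverability, and then verifying by case analysis (or a small induction on $\min(t,d)$) that the total witness count still fits inside $\lceil \log d + 1 \rceil$.
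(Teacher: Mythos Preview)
Your single-witness coverability claim is correct and is precisely the mechanism the paper uses as well. The decomposition, however, is genuinely different, and the obstacle you flag is fatal to the halving scheme rather than a boundary nuisance.

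The failure is not at the tail of the process but at the very first step. You need $2^{t-\lceil t/2\rceil}=2^{\lfloor t/2\rfloor}>d$, i.e.\ $\lfloor t/2\rfloor>\log d$, and this is false throughout the whole range $\lceil\log d+1\rceil < t \le 2\log d+1$. In that range no chunk of size $\geq 2$ can be certified by your claim when $d$ is a power of two, so ``shrink-by-one'' is the \emph{only} available move, costing $t$ witnesses; since $t$ can be close to $2\lceil\log d+1\rceil$, no case analysis will bring this down to $\lceil\log d+1\rceil$. Concretely, for $d=4$ and $t=4$ the budget is $3$, but $2^{4-2}=4\not>4$ forces chunk size $1$ and hence $4$ witnesses.

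The paper's route is to partition $S$ \emph{up front} into $p:=\lceil\log d+1\rceil$ parts of nearly equal size $\lceil t/p\rceil$, and apply the single-witness argument to each part in parallel. The coverability condition then reads $2^{t-\lceil t/p\rceil}>d$, i.e.\ essentially $t(1-1/p)>\log d$, which rearranges to $t>p$ --- exactly the hypothesis of the second case. If some part nonetheless fails (all its apex witnesses lie to the left), the inequality $2^{t-\lceil t/p\rceil}\le d$ is turned around to force $t\le p$. The point you are missing is that an equal $p$-way split keeps \emph{every} chunk at size about $t/p$, whereas halving makes the first chunk of size $t/2$, which is exactly where your feasibility breaks.
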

\begin{proof}
  Let $S = (s_1,\ldots,s_t)$ and $W = (w_1,\ldots,w_{2^t})$ be the vertices of
  $U_t$ in $G$ and let the indices of the variables reflect
  the ordering of the corresponding vertices in $\mathbb U_t$.

  Partition the set $S$ into $p \defineq \ceil{\log d + 1}$ sets $S_1,\ldots,S_{p}$ such that each set has size at least $\lfloor t/p \rfloor$ and at most $\lceil t/p \rceil$. For each set $S_i$ define the set of ``apex''-witnesses $A_i \defineq \{ w \in W \mid N(w) \supset S_i \}$. Note that, for all~$i \in [p]$,
  \[
    | A_i | = 2^{|S \setminus S_i|} \geq 2^{t-\lceil t/p \rceil}
    = 2^{\lceil t \frac{p-1}{p} \rceil}.
  \]
  We call a set $A_i$ \emph{good} if $\max_{\mathbb G} A_i > \max_{\mathbb G}
  S_i$, that is, at least one apex vertex from~$A_i$ can be found to the right
  of $S_i$. We now distinguish two cases:

  \smallskip\noindent
  \textbf{Case 1}. All $A_i$, $i \in [p]$, are good.

  \noindent
  It follows that $\mathbb U_t$ can be left-covered by taking one vertex from each~$A_i$, $i \in [p]$. We conclude that $\lc(\mathbb U_t) \leq p = \ceil{\log d + 1}$.

  \smallskip\noindent
  \textbf{Case 2}. Some $A_i$, $i \in [p]$, is not good.

  \noindent
  Let $u = \max_{\mathbb G} S_i$ be the last vertex in $S_i$, note that $A_i
  \leq_{\mathbb G} u$ and accordingly $A_i \subseteq N^-(u)$. But then we must
  have that $|A_i| \leq d$ and accordingly that
  \begin{align*}
      2^{\lceil t \frac{p-1}{p} \rceil} \leq d
      \iff&  \lceil t \frac{p-1}{p} \rceil \leq \log d
      \implies t \frac{p-1}{p}  \leq \log d
      \iff t \leq \frac{p}{p-1} \log d \\
      \iff& t \leq \frac{\ceil{\log d + 1}}{\ceil{\log d + 1}-1} \log d
      = \frac{\log d}{\ceil{\log d}} \ceil{\log d + 1} \leq \ceil{\log d + 1}.
  \end{align*}
We therefore find that $\lc(\mathbb U_t) \leq |S| \leq \ceil{\log d + 1}$.
\end{proof}

\begin{theorem}
  Let $G$ be a $d$-degenerate graph on~$n$ vertices. Then we can determine the
  VC-dimension of its neighbourhood set system~$\mathcal F(G)$ in time $O(n^{\ceil{\log d + 1}} d^{d+2} (2d \log d)^{d+1})$.
\end{theorem}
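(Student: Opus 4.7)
The plan is to enumerate candidate shattered sets $S$ of size $t \in \{1,\ldots,d+1\}$ (the upper bound coming from $\vc(G) \leq d+1$) and verify shattering through the data-structure machinery of the previous section. A naive enumeration over all subsets of size at most $d+1$ would cost $\Theta(n^{d+1})$, so I will leverage \Cref{lemma:shattered-small-lc} to restrict the search to $O(n^{\ceil{\log d+1}})$ anchor choices.

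First, I precompute a degeneracy ordering $\G$ in time $O(dn)$, sort all left-neighbourhoods under this ordering, and build the subset dictionary $R$ from \Cref{lemma:R} in time $O(d 2^d n)$. For any candidate $S \subseteq V(G)$ with $|S| \leq d+1$ I invoke \Cref{lemma:count} to construct $\DataStruct_S$ in time $O(|S| 2^{|S|} + d|S|^2) = O(d 2^d + d^3)$, and then test whether $\DataStruct_S[X] \geq 1$ for every $X \subseteq S$; by definition of the dictionary this is exactly the certificate that $S$ is shattered by~$\mathcal F(G)$.

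The enumeration splits along the dichotomy of \Cref{lemma:shattered-small-lc}. In the first regime, $t \leq \ceil{\log d+1}$, so it suffices to iterate directly over all $S \subseteq V(G)$ with $|S| \leq \ceil{\log d+1}$; there are $O(n^{\ceil{\log d+1}})$ such sets and each is checked in $O(d\log^2 d)$ time. In the second regime, the induced ordering of the shattered pattern $U_t$ admits a left-cover of size $\ceil{\log d+1}$ consisting entirely of witness vertices; since the witness side is disjoint from the shattered set $S$ and the black edges of $U_t$ only connect the two sides, the condition $S \subseteq N^-(C) \cup C$ collapses to $S \subseteq N^-(C)$. I therefore iterate over all $C \subseteq V(G)$ with $|C| = \ceil{\log d+1}$ (again $O(n^{\ceil{\log d+1}})$ choices) and, for each $C$, over all $S \subseteq N^-(C)$ with $\ceil{\log d+1} < |S| \leq d+1$. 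Using $|N^-(C)| \leq d\ceil{\log d+1} \leq 2d \log d$ for $d \geq 2$, there are at most $(2d \log d)^{d+1}$ such $S$, each verified in $O(d 2^d + d^3)$ time. The algorithm outputs the largest $|S|$ observed to be shattered across both regimes.

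Summing the costs: preprocessing is $O(d 2^d n)$; the first regime contributes $O(n^{\ceil{\log d+1}} \cdot d\log^2 d)$; and the second regime dominates with $O(n^{\ceil{\log d+1}} (2d \log d)^{d+1}(d 2^d + d^3)) = O(n^{\ceil{\log d+1}} d^{d+2} (2d \log d)^{d+1})$, matching the claim. The main obstacle is the correctness of the restriction in the second regime: one must confirm that the witness cover supplied by \Cref{lemma:shattered-small-lc} really forces $S$ into $N^-(C)$, which follows from the bipartite structure of $U_t$ together with the fact that the cover $C$ necessarily lies on the witness side (a proper subset of $C$ cannot cover $W$ from within, so $C$ must cover $S$).
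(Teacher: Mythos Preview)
Your proposal is correct and follows essentially the same approach as the paper. The only cosmetic difference is that the paper unifies your two regimes into a single search over the \emph{closed} left-neighbourhood $N^-[C]$ of each anchor set $C$ of size at most $\lceil\log d+1\rceil$ (so that when $t\le\lceil\log d+1\rceil$ the set $S$ itself serves as $C$), whereas you handle the small-$t$ case by direct enumeration; the running-time analysis and the use of \Cref{lemma:shattered-small-lc} and \Cref{lemma:count} are otherwise identical.
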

\begin{proof}
  We first compute an ordering~$\G$ of~$G$ with degeneracy~$d$ in time~$O(dn)$ and sort all left-neighbourhoods in time $O(d \log d \cdot n)$.
  Let~$p \defineq \ceil{\log d + 1}$ in the following.

  Let $\mathbb U_t = (S,W)$ be a shattered set of size $t \leq d+1$ in $\mathbb G$. By \Cref{lemma:shattered-small-lc} we then have that $\lc(\mathbb U_t) \leq p$. Therefore to locate the set~$S$ we first guess up to
  $p$ vertices and then exhaustively search through their (closed) left-neighbourhoods in time
  \[
    {n \choose p}{dp \choose t}
    \leq  \Big( \frac{en}{p} \Big)^p \Big( \frac{edp}{t} \Big)^t
    = O \left( n^{\ceil{\log d + 1}} (d \log d)^{d+1} \right)
    .
  \]
  Now that we can locate~$S$ we apply \Cref{thm:data-structure} in
  order to verify that~$S$ is indeed shattered: For each candidate set~$S$
  from the previous step, we compute a subset dictionary~$\DataStruct_S$ in time~$O(|S|
  2^{|S|} + d |S|^2) = O(d 2^d)$ and then check whether~$\DataStruct_S[X] > 0$ for
  each~$X \subseteq S$. This latter step takes time~$O(|S| 2^{|S|})$ and is
  therefore subsumed by the construction time of~$\DataStruct_S$. We conclude that the
  algorithm runs in total time
  \[
    O(d \log d \cdot n) + O(d2^d n) + O \left( n^{\ceil{\log d + 1}} (d \log d)^{d+1} \cdot d2^d \right)
    = O \left( n^{\ceil{\log d + 1}} d^{d+2} (2d \log d)^{d+1} \right)
  \]
  as claimed.
\end{proof}

\noindent
We note that the exponent of~$\lceil \log d + 1 \rceil$ in the running time is almost tight:

\begin{theorem}\label{thm:lower-bound}
  \textsc{Graph VC-dimension} parameterized by the degeneracy~$d$ of the
  input graph cannot be solved in time~$f(d) \cdot n^{o(\log d)}$
  unless all problems in \SNP{} can be solved in subexponential time.
\end{theorem}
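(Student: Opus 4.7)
The plan is to reduce from a problem with a known $n^{\Omega(\log n)}$ conditional lower bound under ETH---equivalently, under the non-subexponentiality of \SNP. A natural starting point is $k$-\textsc{Clique} on an $n$-vertex graph $H$ with $k=\lceil\log n\rceil$: the well-known lower bound of Chen et al.\ rules out $f(k)\,n^{o(k)}$ algorithms for $k$-\textsc{Clique} under ETH, and at this value of $k$ it specialises to ruling out $f(\log n)\,n^{o(\log n)}$ algorithms. We will transform $H$ into an instance $G'$ of \textsc{Graph VC-dimension} of polynomial size whose degeneracy satisfies $\log d = \Theta(\log n)$, so that an $n^{o(\log d)}$ algorithm for VC-dimension would transport back to an $n^{o(\log n)}$ algorithm for $\log n$-\textsc{Clique}.

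First I would construct the reduction as a bipartite graph $G'$ with a candidate side $L$ encoding a guess for a $k$-clique of $H$ and a witness side $W$ that realises the $2^k$ neighbourhood patterns demanded by a shattered set. Concretely, for each subset $T\subseteq[k]$ of clique-positions and each admissible assignment of positions to vertices of $H$, we create witness vertices whose neighbourhoods in $L$ cut out exactly the prescribed subset, arranged so that the shattered pattern $U_k$ appears in $G'$ if and only if $H$ contains a $k$-clique. An alternative, which may be cleaner, is to start from one of the \LOGNP-complete versions of \textsc{VC-dimension} on set systems---for which an $n^{\Omega(\log n)}$ lower bound under ETH is already known through the standard \LOGNP-hardness reductions---and then take the bipartite incidence graph, in which case one only needs to argue that this graph has polynomial degeneracy.

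Next I would verify the parameter bounds. A careful construction gives $N\defineq|V(G')|=n^{O(1)}$ with degeneracy $d=n^{O(1)}$, which is $2^{\Theta(k)}$ under the choice $k=\lceil\log n\rceil$. In particular $\log d=\Theta(\log n)=\Theta(\log N)$, which is the key relation. Given a hypothetical algorithm of running time $f(d)\,N^{o(\log d)}$ for \textsc{Graph VC-dimension}, applying it to $G'$ decides $k$-\textsc{Clique} in time $f(n^{O(1)})\cdot n^{o(\log n)}=f'(k)\cdot n^{o(k)}$, where $f'(k)\defineq f(2^{O(k)})$ is itself a function of~$k$, contradicting the ETH-based lower bound and hence the non-subexponentiality of \SNP.

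The main obstacle I anticipate is controlling $|W|$ so that it stays polynomial while ensuring that \emph{only} a genuine $k$-clique of $H$ can induce a shattered set in $G'$. A naive construction that provides a witness for every subset of every $k$-clique of $H$ is too wasteful, since $H$ may already host $n^{\Omega(k)}$ cliques; the witness gadgets must therefore multiplex information, for instance by linking partial-assignment gadgets through clause-like checkers that jointly verify the $\binom{k}{2}$ edge requirements. If this combinatorial engineering proves delicate, the robust fallback is the incidence-graph route above, where the main remaining task collapses to verifying that hard set-system instances admit polynomial-degeneracy bipartite representations.
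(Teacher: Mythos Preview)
Your high-level plan---reduce from $k$-\textsc{Clique} to a bipartite instance with a candidate side and a witness side---matches the paper's. The genuine gap is your decision to fix $k=\lceil\log n\rceil$. The sentence ``at this value of $k$ it specialises to ruling out $f(\log n)\,n^{o(\log n)}$ algorithms'' is simply false: Chen~\etal rule out algorithms of the form $f(k)\,n^{o(k)}$ that work for \emph{all}~$k$, and this does not restrict to a statement with an arbitrary multiplicative $f(\log n)$ in front at $k=\log n$. Concretely, with $d=n^{\Theta(1)}$ the factor $f(d)$ becomes $f(n^{\Theta(1)})$, and since $f$ is unrestricted it may be, say, $2^{2^{d}}$; the ``$k$-\textsc{Clique} algorithm'' you derive then runs in doubly-exponential time and contradicts nothing. (Indeed, $\log n$-\textsc{Clique} is solvable in time $n^{\log n}\ll 2^{n}\cdot n^{o(\log n)}$ by brute force, so a lower bound of the form $f(\log n)\,n^{o(\log n)}$ for arbitrary $f$ cannot hold.)

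The fix---and what the paper does---is a \emph{parameterised} reduction valid for every~$k$, keeping $d<2^{k}$ independent of~$n$. The paper takes $k$ disjoint copies $V_1,\dots,V_k$ of $V(H)$ as the candidate side; on the witness side it places an isolated vertex and one pendant per candidate (witnessing subsets of size~$0$ and~$1$), a degree-two vertex $w_{uv}^{ij}$ for every edge $uv\in E(H)$ and every pair $i\neq j$ (witnessing size-$2$ subsets, which is where the clique constraint lives), and one apex vertex $a_I$ complete to $\bigcup_{i\in I}V_i$ for every $I\subseteq[k]$ (witnessing all larger subsets, obliviously to $E(H)$). This also dissolves your worry about~$|W|$: only the size-$2$ witnesses depend on $E(H)$, contributing $O(k^{2}\|H\|)$ vertices, while the $2^{k}$ apex vertices are independent of~$H$. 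The resulting instance has size $\mathrm{poly}(n)+2^{k}$ and degeneracy $<2^{k}$, so a hypothetical $f(d)\,N^{o(\log d)}$ algorithm yields a $g(k)\,n^{o(k)}$ algorithm for $k$-\textsc{Clique} for all~$k$, which does contradict Chen~\etal.
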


\begin{proof}
  We adapt the W[1]-hardness reduction from \textsc{$k$-Clique} to
  \textsc{VC-dimension} by Downey, Evans, and Fellows~\cite{downey_parameterized_1993} and combine it
  with the result by Chen \etal~\cite{chen_strong_2006,chen_computational_2006}  which states that \textsc{$k$-Clique}
  cannot be solved in time $f(k) n^{o(k)}$ unless all problems in \SNP{} admit
  subexponential-time algorithms.

  Given an instance~$(H,k)$ for \textsc{$k$-Clique}, we construct a graph~$G$ as follows. We first create~$k$ copies $V_1,\ldots,V_k$ of~$V(H)$. For~$v \in H$, let us denote its copies by~$v^{(1)},\ldots,v^{(k)}$ with~$v^{(i)} \in V_i$ for~$i \in [k]$. We now add the following vertices and edges:
  \begin{itemize}
    \item A single isolated vertex~$w_0$,
    \item a vertex set~$W_1$ which contains one pendant vertex for each $v^{(i)}$, $v \in H$ and~$i \in [k]$,
    \item a vertex set~$W_2$ which for each edge~$uv \in H$ contains ${k \choose 2}$~vertices $w_{uv}^{ij}$, $i,j \in [k]$, each of which
    $u^{(i)}$ and~$v^{(j)}$ as its only neighbours, and
    \item a vertex set~$A$ which for each index set~$I \subseteq [k]$ contains a vertex~$a_I$ which is connected to all vertices in~$V_i$ for each $i \in I$.
  \end{itemize}
  Note that the graph is bipartite with partite sets~$\mathcal V \defineq V_1 \cup \cdots \cup V_k$ and $\mathcal W \defineq W_1 \cup W_2 \cup A$.

  Let us first show that if~$H$ contains a clique of size~$k$ then~$G$ contains a shattered set of size~$k$. Let~$u_1,\ldots,u_k$ be distinct vertices that form a complete graph in~$H$. We claim that then the set~$S \defineq \{u_1^{(1)},
  \ldots,u_k^{(k)}\}$ is shattered in~$G$. First, note that for every subset~$X \subset S$, $|X| \geq 3$, there exists a witness vertex~$a \in A$ such that
  $N(a) \cap S = X$. For the empty set we have the witness~$w_0$, for every singleton subset~$\{u\} \subseteq S$ we have that the pendant vertex~$p \in N(u) \cap W_1$ witnesses $\{u\}$. Therefore, only subsets of size exactly two need to be witnesses to shatter~$S$. Consider
  $\{u_i^{(i)}, u_j^{(j)}\} \subseteq S$ for~$i \neq j$. Since~$u_iu_j \in H$, the vertex~$w^{ij}_{u_i u_j}$ exists in~$W_2$ and its neighbourhood in~$S$
  is exactly~$\{u_i^{(i)}, u_j^{(j)}\}$. We conclude that all subsets of size two in~$S$ are witnessed as well and therefore~$S$ is shattered.

  In the other direction, assume that~$G$ contains a shattered set~$(S,W)$ of size~$k$. Without loss of generality, assume that~$k \geq 3$.
  \begin{claim}
    $S \subseteq \mathcal V$ and~$W \subseteq \mathcal W$.
  \end{claim}
  \begin{proof}
    Since~$G$ is bipartite we either have that $S \subseteq \mathcal V$ and~$W \subseteq \mathcal W$ or that $S \subseteq \mathcal W$ and~$W \subseteq \mathcal V$. Let us now show that the latter is impossible.

    Since~$k \geq 3$ we have that every vertex in~$S$ has degree at least four.
    Accordingly, $W$ cannot contain vertices from~$W_1$ or~$W_2$, which leaves us with~$W \subseteq A$. However, all vertices in~$V_i$, $i \in [k]$, have the exact same neighbours in~$A$. Therefore only~$k$ subsets of~$A$ are witnessed by vertices in~$\mathcal V$ and therefore the largest shattered set in~$A$ has size at most~$\log k$. We conclude that~$S$ cannot be contained in~$A$ and the claim holds.
  \end{proof}
  We now claim that $|S \cap V_i| = 1$ for all~$i \in [k]$. Assume otherwise,
  so let $u^{(i)}, v^{(i)} \in S$ for some~$i \in [k]$. But then the set
  $\{u^{i}, v^{i}\}$ cannot be witnessed: not by a vertex from~$W_1$, since it
  only contains vertices with one neighbour, not by a vertex from~$W_2$,
  since these vertices each have at most one neighbour in each set~$V_i$, and
  not by a vertex from~$A$ since we need all~$2^k - {k \choose 2} - k - 1$ vertices of~$A$ to witness subsets of~$S$ of size at least three.

  Therefore~$S$ intersects each~$V_i$ in exactly one vertex. Since~$S$ is shattered, every subset~$\{u^{(i)}, v^{(j)}\}$, $i \neq j$, is shattered. By the same logic as above, this can only be due to a witness~$w^{ij}_{uv} \in W_2$ and therefore $uv \in H$. We conclude that indeed~$u_1, \ldots, u_k$ induce a complete graph in~$H$, as claimed.

  Finally, we need to determine the degeneracy of~$G$. Consider the following elimination sequence: We first delete all of~$\{w_0\} \cup W_1 \cup W_2$, all of which have degree at most two. Note now that all vertices in~$\mathcal V$ have at most~$|A| < 2^k$ neighbours in~$A$, so we delete~$\mathcal V$ and then~$A$. In total, the maximum degree we encountered in this deletion sequence is~$< 2^k$.

  Assume we could solve \textsc{Graph VC-Dimension} in time~$f(d) n^{o
  (\log d)}$. In the above reduction the degeneracy of the constructed
  graph is $d < 2^k$, thus this running time for \textsc{Graph VC-Dimension} would imply a running time of
  \[
    f(d) \cdot n^{o(\log d)} = f(2^k) \cdot n^{o(\log 2^k)} = f(2^k) \cdot n^{o(k)}
  \]
  for \textsc{$k$-Clique}. We conclude that \textsc{VC-dimension} parameterized by the degeneracy of the input graph cannot be solved in time~$f(d) n^{o(\log d)}$ unless all problems in \SNP{} can be solved in subexponential time.
\end{proof}

\noindent
We note that \Cref{lemma:shattered-small-lc} allows us to approximate the VC-dimension of degenerate graphs.

\begin{theorem}
  Let $G$ be a $d$-degenerate graph on~$n$ vertices. Then for any $0 < \epsilon
  \leq 1$ we can approximate the VC-dimension of~$G$ in time~$O(d2^d (2n)^{\lceil \eps (1 + \log d) \rceil})$ within a factor of~$\epsilon$.
\end{theorem}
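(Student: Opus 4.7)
Proof proposal: The plan is to obtain the approximation by weakening the exact VC-dimension algorithm of the previous theorem, replacing the enumeration exponent $p \defineq \lceil 1 + \log d \rceil$ with the smaller exponent $q \defineq \lceil \epsilon(1 + \log d) \rceil$. I would begin by computing a degeneracy ordering of $G$ in $O(dn)$ time, sorting all left-neighbourhoods, and preprocessing the dictionary $R$ of \Cref{lemma:R} in $O(d 2^d n)$ time.

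The main loop enumerates every $S \subseteq V(G)$ with $|S| \le q$ (in $O(n^q)$ time overall). For each candidate $S$, I would invoke \Cref{thm:data-structure} to construct $\DataStruct_S$ in $O(|S| 2^{|S|} + d|S|^2) = O(d 2^q)$ time, then decide shattering by verifying $\DataStruct_S[X] > 0$ for every $X \subseteq S$ in $O(|S| 2^{|S|})$ time. The algorithm returns the maximum $|S|$ of any shattered candidate. Summing the contributions, the running time is $O(d 2^d n) + O(n^q \cdot d 2^q) = O(d 2^d (2n)^q)$, matching the bound.

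For correctness, I would use two ingredients. First, every subset of a shattered set is itself shattered (directly from the definition: a witness for $X \subseteq S' \subseteq S$ is automatically a witness for $X$ restricted to~$S'$), so whenever $\vc(G) \ge q$ the algorithm is guaranteed to certify a shattered set of size exactly $q$ (namely any $q$-subset of a maximum shattered set), and when $\vc(G) < q$ it returns $\vc(G)$ exactly. The output therefore equals $\min(q, \vc(G))$. Second, I would show $\min(q,\vc(G)) \ge \epsilon \vc(G)$ by replaying the partition argument of \Cref{lemma:shattered-small-lc} with $q$ blocks in place of $p$: partitioning a maximum shattered set $S^*$ of size $t^*$ into $q$ roughly equal parts, the same dichotomy yields either (a) all apex sets $A_i$ are good, producing a $q$-witness left-cover of $S^*$, or (b) some $|A_i| \le d$, which by the analogous arithmetic forces $t^* \le \tfrac{q}{q-1}(1+\log d) \le q/\epsilon$ and hence $q \ge \epsilon t^*$.

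The main obstacle is case (a) in the regime where $t^*$ substantially exceeds $q/\epsilon$: there the bare output $q$ of the simple enumeration does not immediately dominate $\epsilon t^*$. To close this gap I would exploit that the $q$ chosen apex witnesses partition (a portion of) $S^*$ into $q$ blocks of size $\lfloor t^*/p \rfloor$ each, so that selecting the union of these blocks produces a shattered subset of $S^*$ of size at least $q\lfloor t^*/p\rfloor$. The delicate technical point is showing that the enumeration in the main loop already traverses such a subset (so that no explicit iteration over $N^-(C) \cup C$ is required, which would blow past the $(2n)^q$ budget) and carrying out the floor/ceiling bookkeeping in the inequality $q\lfloor t^*/p\rfloor \ge \epsilon t^*$.
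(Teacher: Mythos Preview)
Your algorithm enumerates only sets of size at most $q = \lceil \epsilon(1+\log d) \rceil$, so its output is exactly $\min(q,\vc(G))$, as you correctly observe. But this is \emph{not} an $\epsilon$-approximation in general: $\vc(G)$ can be as large as $d+1$, while $q \le \epsilon(1+\log d)+1$, so the ratio $q/\vc(G)$ can be as small as roughly $\epsilon\log d/d$, which tends to~$0$ as $d$ grows. Concretely, for a graph containing $U_{d+1}$ your algorithm returns~$q$, the true VC-dimension is $d+1$, and the guarantee already fails for $\epsilon=\tfrac12$ and $d\ge 8$.

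Your final paragraph does not close this gap. You identify a shattered subset of $S^*$ of size roughly $q\lfloor t^*/p\rfloor \approx \epsilon t^*$ and then ask that ``the enumeration in the main loop already traverses such a subset''. It cannot: that subset has size $\approx \epsilon t^*$, which for $t^*$ near $d+1$ is far larger than~$q$, and your main loop never inspects sets that big. This is not a delicate bookkeeping issue; it is structurally impossible for the algorithm you describe. (Your case~(b) arithmetic also yields only $t^* \lesssim \tfrac{q}{q-1}\log d$, which does not imply $t^* \le q/\epsilon$ without further work, but that is secondary.)

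The paper's algorithm differs precisely in the step you tried to avoid: it enumerates sets $C$ of size $c=q$, but then brute-forces \emph{all} subsets of $L \defineq N^-[C]$ (which may have size up to $c(d+1)$) to find the largest shattered one. Correctness uses that, when $t^* > 1+\log d$, \Cref{lemma:shattered-small-lc} supplies $p=\lceil 1+\log d\rceil$ witness vertices whose left-neighbourhoods together cover~$S^*$; by averaging, some $c$ of those witnesses $W'$ satisfy $|N^-[W'] \cap S^*| \ge c\,t^*/p \ge \epsilon t^*$, and this shattered set is discovered when the main loop reaches $C=W'$. The exhaustive search inside $N^-[C]$ is essential---the approximation guarantee cannot be recovered by enumerating candidate shattered sets of size $\le q$ directly.
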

\begin{proof}
  We first compute a $d$-degenerate ordering $\G$ of~$G$ in time~$O(dn)$ and sort its left-neighbourhoods in time $O(d \log d \cdot n)$.
  Let~$U_t = (S,W)$ be the largest shattered set in~$G$ and let~$\mathbb U_t$ be its ordering in~$\G$. We further prepare the use of \Cref{thm:data-structure} by computing the necessary data structure in time~$O(d2^d n)$.

  Let $c \defineq \lceil \eps (1 + \log d) \rceil$.
  The algorithm now iterates over all~$C \subseteq V(G)$ of size~$c$ and searches the left-neighbourhood $L \defineq N^-[C]$ for a shattered set by first computing
  a subset dictionary $\DataStruct_L$ in time~$O(d2^d)$ and then finding the largest
  shattered subset $S \subseteq L$ by brute-force in time~$O(|L|2^{|L|}) = O(cd2^{cd})$.

  We claim that this simple algorithm computes the claimed approximation of the VC-dimension. By \Cref{lemma:shattered-small-lc} we either have that
  $t \leq \log d + 1$ or that~$\mathbb U_t$ can be left-covered by~$\log d + 1$
  witness vertices. In the first case, our algorithm will trivally locate an $\epsilon$-fraction of a maximal solution since it tests every set of size $c$.
  In the second case, the shattered set~$S$ of~$\mathbb U_t$ is covered by the left-neighbourhood of witness vertices
  $w_1,\ldots,w_p \in W$ for~$p \defineq \log d + 1$. Then by simple averaging, there exist $c$ witnesses~$W'$ such that~$|N^-[W'] \cap S| \geq c|S|/p = ct/(\log d + 1)$. Since the above algorithm will find the shattered set $N^-[W] \cap S$ when inspecting the left-neighbourhood of~$W$, we conclude that it will output at least a value of~$ct/(\log d + 1)$. In either case the approximation factor is~$\frac{c}{1+\log d} \geq \eps$, as claimed.
\end{proof}

\noindent
We would like to highlight the special case of $c=1$ of the above theorem as it provides us with a linear-time approximation of the VC-dimension, which is probably a good starting point for practical applications:

\begin{corollary}
  Let $G$ be a $d$-degenerate graph on~$n$ vertices. Then we can approximate the VC-dimension of~$G$ in time~$O(d 2^d n)$ within a factor of~$\frac{1}{1+\log d}$.
\end{corollary}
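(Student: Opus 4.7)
The plan is to derive this corollary by direct instantiation of the preceding theorem rather than reproving anything from scratch. That theorem asserts an $\epsilon$-approximation of the VC-dimension in time $O(d 2^d (2n)^{\lceil \epsilon (1 + \log d)\rceil})$ for any $0 < \epsilon \le 1$, so the strategy is simply to pick the unique value of $\epsilon$ that makes the ceiling collapse to $1$.

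Concretely, I would set $\epsilon \defineq \frac{1}{1 + \log d}$. This is valid because $d \geq 1$ gives $\epsilon \leq 1$, and under this choice $\epsilon (1 + \log d) = 1$, so $c = \lceil \epsilon(1+\log d) \rceil = 1$. Plugging $c = 1$ into the running time bound of the preceding theorem yields $O(d 2^d (2n)^1) = O(d 2^d n)$, and the guaranteed approximation factor is exactly the chosen $\epsilon = \frac{1}{1 + \log d}$, matching the claim.

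There is essentially no hard step here: the corollary is just the $c=1$ corner of the approximation theorem, and the only thing to verify is that the algebra of the exponent works out. If I wanted to expand the argument for the reader, I would unfold one layer and restate what the $c=1$ algorithm actually does, namely: compute a degeneracy ordering $\mathbb G$ in time $O(dn)$, sort left-neighbourhoods in time $O(d \log d \cdot n)$, build the subset dictionary $R$ of \Cref{lemma:R} in time $O(d 2^d n)$, then for each single vertex $v \in V(G)$ construct $\DataStruct_{N^-[v]}$ via \Cref{thm:data-structure} in time $O(d 2^d)$ and brute-force search its closed left-neighbourhood for the largest shattered subset in time $O(d 2^d)$. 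Summed over all $v$, this indeed gives $O(d 2^d n)$ and, by the averaging argument in the proof of the previous theorem, recovers at least a $\frac{1}{1+\log d}$ fraction of the optimum.
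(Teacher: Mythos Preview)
Your proposal is correct and matches the paper's approach exactly: the paper introduces this corollary explicitly as ``the special case of $c=1$ of the above theorem'' without further proof, and your instantiation $\epsilon = \frac{1}{1+\log d}$ is precisely what forces $c=\lceil \epsilon(1+\log d)\rceil = 1$. The optional unfolding you give of the $c=1$ algorithm is accurate and goes slightly beyond what the paper spells out.
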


\subsection{Approximating the ladder and semi-ladder index}{}

Before we proceed, we note that degenerate graphs cannot contain arbitrarily long ladders:

\begin{observation}\label{obs:ladder-degen-bound}
  If $G$ is $d$-degenerate then $G$ cannot contain a ladder of length $2d+2$.
\end{observation}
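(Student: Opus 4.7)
The plan is to exhibit, inside any copy of $L_{2d+2}$ in $G$, a subgraph of $G$ whose minimum degree exceeds $d$, which contradicts $d$-degeneracy.

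Recall that in $L_n$ with sides $A = (a_1,\ldots,a_n)$ and $B = (b_1,\ldots,b_n)$, the edge $a_i b_j$ is black precisely when $i > j$. Assuming $G$ contains $L_{2d+2}$, let $\phi\colon V(L_{2d+2}) \to V(G)$ be a corresponding embedding, so that all black edges of $L_{2d+2}$ are realised as edges of $G$. I would then focus on the ``upper right'' block of the ladder, namely the image of
\[
  A' \defineq \{a_{d+2}, a_{d+3}, \ldots, a_{2d+2}\}, \qquad
  B' \defineq \{b_1, b_2, \ldots, b_{d+1}\},
\]
and consider the subgraph of $G$ induced on $\phi(A' \cup B')$.

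The key point is that for every $a_i \in A'$ and every $b_j \in B'$ we have $i \geq d+2 > d+1 \geq j$, hence $a_i b_j$ is a black edge of $L_{2d+2}$ and thus $\phi(a_i)\phi(b_j) \in E(G)$. Therefore the induced subgraph on $\phi(A'\cup B')$ contains the complete bipartite graph $K_{d+1,d+1}$ as a spanning subgraph, and in particular each of its $2d+2$ vertices has degree at least $d+1$.

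Since a $d$-degenerate graph requires every subgraph to contain a vertex of degree at most $d$, the existence of this subgraph of minimum degree $d+1$ contradicts the assumption that $G$ is $d$-degenerate. The only ``obstacle'' is purely bookkeeping, namely picking the right index ranges on the two sides; once $A'$ and $B'$ are chosen so that the smallest index in $A'$ strictly exceeds the largest index in $B'$, every pair is forced to form a black edge, and the $K_{d+1,d+1}$ appears for free.
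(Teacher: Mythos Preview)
Your proof is correct and follows essentially the same route as the paper: both arguments extract a $K_{d+1,d+1}$ from the ladder (the paper phrases it as $K_{\lfloor t/2\rfloor,\lfloor t/2\rfloor}$ inside $L_t$) and appeal to the minimum-degree characterisation of degeneracy. You simply make the index bookkeeping explicit where the paper leaves it implicit.
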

\begin{proof}
  Note that a ladder of length~$t$ contains a complete bipartite
  graph $K_{\lfloor t/2 \rfloor, \lfloor t/2 \rfloor}$, \ie a subgraph of
  minimum degree $\lfloor t/2 \rfloor$. Therefore $t < 2d+2$.
\end{proof}

\noindent
Again we find that a direct application of \Cref{thm:general-algorithm} to ladder patterns does not yield a satisfying running time since~$\lc(L_t) \approx t/2$. However, we can always left-cover a large portion of a ladder with only one vertex:

\begin{observation}\label{obs:ladder-lc}
  Let $(A,B)$ induce a ladder of length $t$ in $G$. For every ordering
  $\mathbb G$ of $G$ there exists a vertex $u \in A \cup B$ such that $|N^-
  (u) \cap (A \cup B)| \geq \lfloor t/2 \rfloor$.
\end{observation}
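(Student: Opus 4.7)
The plan is to identify two vertex sets inside the ladder that (i) together form a complete bipartite subgraph of $L_t$ on $2\lfloor t/2 \rfloor$ vertices, and (ii) each have exactly $\lfloor t/2 \rfloor$ members. Once we have such a biclique, a simple ``last vertex'' argument in $\mathbb G$ gives us the required left-neighbourhood size regardless of how $\mathbb G$ happens to place the vertices.

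Recall that in the ladder $L_t$ defined on $A=(a_i)$ and $B=(b_j)$, the black edges are exactly $\{a_ib_j : i > j\}$. I would set
\[
H_A \defineq \{a_i \in A : i \text{ is among the largest } \lfloor t/2 \rfloor \text{ indices}\},
\qquad
H_B \defineq \{b_j \in B : j \text{ is among the smallest } \lfloor t/2 \rfloor \text{ indices}\}.
\]
By construction $|H_A|=|H_B|=\lfloor t/2 \rfloor$. A straightforward index comparison---the smallest index appearing in $H_A$ strictly exceeds the largest index appearing in $H_B$, since together they account for at most $2\lfloor t/2 \rfloor \le t$ of the $t$ available indices---shows that for every $a_i \in H_A$ and every $b_j \in H_B$ we have $i > j$, so $a_ib_j$ is an edge of the ladder. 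Hence $H_A \cup H_B$ induces a complete bipartite graph $K_{\lfloor t/2 \rfloor, \lfloor t/2 \rfloor}$ in $G$.

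Now let $u$ be the last vertex of $H_A \cup H_B$ with respect to the ordering $\mathbb G$. If $u \in H_A$, then by the choice of $u$ every vertex of $H_B$ lies to the left of $u$ in $\mathbb G$, and by the biclique property every vertex of $H_B$ is a neighbour of $u$ in $G$; therefore $H_B \subseteq N^-(u) \cap (A \cup B)$, giving
\[
|N^-(u) \cap (A \cup B)| \;\ge\; |H_B| \;=\; \lfloor t/2 \rfloor.
\]
The case $u \in H_B$ is symmetric, replacing $H_B$ by $H_A$.

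The only delicate point is the choice of the two cut-offs: we need both $H_A$ and $H_B$ to contain exactly $\lfloor t/2 \rfloor$ vertices \emph{and} the strict inequality $i > j$ to hold across the bipartition, which is tight when $t$ is odd. Picking the top $\lfloor t/2 \rfloor$ indices on one side and the bottom $\lfloor t/2 \rfloor$ indices on the other exactly reconciles these two requirements; every remaining step is immediate from the definitions of $\mathbb G$ and $N^-(\cdot)$.
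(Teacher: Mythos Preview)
Your argument is correct and follows the same strategy as the paper: isolate a $K_{\lfloor t/2\rfloor,\lfloor t/2\rfloor}$ inside the ladder by taking high-index $a$'s against low-index $b$'s, then pick the $\mathbb G$-maximal vertex of that biclique. Your index bookkeeping is in fact tidier than the paper's, which writes $A'=(a_i)_{i\ge t/2}$ and $B'=(b_i)_{i\le t/2}$ and thereby risks including the non-edge $a_{t/2}b_{t/2}$ when $t$ is even; your explicit choice of the top and bottom $\lfloor t/2\rfloor$ indices avoids that off-by-one.
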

\begin{proof}
  Let $A' \defineq (a_i)_{i \geq t/2}$ and $B' \defineq (b_i)_{i \leq t/2}$,  then $G
  [A'\cup B']$ contains a biclique with partite sets $A',B'$. Let $u \in
  A' \cup B'$ be the largest vertex according to $<_\mathbb G$, then $N^-
  (u) \cap (A', B')$ is either all of $A'$ or all of $B'$. In either case the
  claim holds.
\end{proof}

\begin{theorem}
  Let $G$ be a $d$-degenerate graph on~$n$ vertices and let~$t$ be its ladder-index. Then we can in time~$O(d^2 8^d \cdot n)$ decide whether $G$ contains
  a ladder of size at least~$\lfloor t/2 \rfloor$.
\end{theorem}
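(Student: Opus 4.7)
The plan is to apply \Cref{obs:ladder-lc} to localise the search. If $G$ contains a ladder on $(A,B)$ of length~$t$, then for some vertex $u \in A \cup B$ the closed left-neighbourhood $N^-[u]$ contains an entire partite set of a biclique $K_{\lfloor t/2 \rfloor, \lfloor t/2 \rfloor}$ sitting inside~$L_t$. By the symmetry of the ladder pattern, we may assume this set is $B'_0 \defineq \{b_1,\ldots,b_{\lfloor t/2 \rfloor}\}$, and then the vertices $a_1,\ldots,a_{\lfloor t/2 \rfloor}$ of~$L_t$ witness a ladder of length~$\lfloor t/2 \rfloor$ whose $B$-side is exactly~$B'_0$ and whose $A$-side is disjoint from~$B'_0$. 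It therefore suffices to iterate over every $u \in V(G)$ and every subset~$B' \subseteq N^-[u]$ of size~$m$, and decide whether there exist an ordering $(b_1, \ldots, b_m)$ of~$B'$ and $m$ distinct vertices $a_1, \ldots, a_m \in V(G) \setminus B'$ with $N(a_i) \cap B' = \{b_1, \ldots, b_{i-1}\}$ for each $i \in [m]$.

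To realise this efficiently, we first compute a $d$-degeneracy ordering~$\G$, sort the left-neighbourhoods, and build the subset dictionary~$R$ of \Cref{lemma:R} in time~$O(d2^dn)$. For each~$u \in V(G)$, set $S \defineq N^-[u]$ and invoke \Cref{thm:data-structure} to build~$Q_S$ in time~$O(d2^d + d^3)$. Using
\[
  \big|\{v \in V(G) \setminus B' \mid N(v) \cap B' = T\}\big| = \sum_{Y \subseteq S \setminus B'} Q_S[T \cup Y] - \big|\{b \in B' \mid N(b) \cap B' = T\}\big|,
\]
we assemble, for each $B' \subseteq S$, the collection $W(B') \subseteq 2^{B'}$ of those subsets~$T$ that admit at least one witness outside~$B'$. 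The required ladder exists iff there is a chain $\emptyset = T_0 \subsetneq T_1 \subsetneq \cdots \subsetneq T_{m-1}$ inside~$W(B')$ with $|T_i| = i$, which we detect via a straightforward subset DP on~$2^{B'}$; distinctness of the witnesses $a_i$ comes for free because the sets~$T_i$ differ pairwise.

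Correctness is immediate from \Cref{obs:ladder-lc}: the vertex~$u$ and set~$B'_0$ provided by the observation for a maximum ladder of~$G$ are both enumerated by the algorithm, and the chain induced by~$a_1, \ldots, a_{\lfloor t/2 \rfloor}$ is recovered by the DP at~$(u, B'_0)$. The main obstacle is the combinatorial accounting: for each~$u$ there are $2^{|S|} \leq 2^{d+1}$ candidate~$B'$, and producing~$W(B')$ from~$Q_S$ touches $2^{|S|}$ entries, so across all~$B'$ the work per~$u$ is $O(d \cdot 4^{|S|}) = O(d \cdot 4^d)$, which dominates the $O(|B'| 2^{|B'|})$ DP inside each~$B'$. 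Summed over the~$n$ vertices and combined with the~$O(d2^dn)$ preprocessing, the total running time is~$O(d 4^d n)$, which sits well within the claimed~$O(d^2 8^d \cdot n)$ bound.
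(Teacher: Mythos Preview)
Your proposal is correct and follows essentially the same approach as the paper: localise one side of a half-ladder inside some $N^-[u]$ via \Cref{obs:ladder-lc}, then use the neighbourhood dictionary to test for a maximal chain of witnessed subsets. Your version is in fact slightly more careful than the paper's---you explicitly exclude witnesses lying inside $B'$ and build $Q_S$ once per vertex rather than once per candidate subset---which is why you arrive at the tighter $O(d4^d n)$ bound, well within the stated $O(d^2 8^d n)$.
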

\begin{proof}
  We compute a degeneracy ordering $\mathbb G$ of $G$ and initialize the data
  structure~$R$ as per \Cref{lemma:R} in time $O(2^d n)$.

  Let $(A,B)$ induce a ladder of maximum size~$t$ in $G$, by \Cref{obs:ladder-degen-bound} we
  have that $t \leq 2d+1$. By \Cref{obs:ladder-lc}, there exists a vertex $u \in
  A \cup B$ such that $N^-(u)$ contains either $A' \defineq (a_i)_{i \geq t/2}$ or
  $B' \defineq (b_i)_{i \leq t/2}$. Wlog assume $A' \subseteq N^-(u)$ and let $k \defineq
  |A'|$. We guess $u$ in $O(n)$ time and $A' \subseteq N^-(u)$ in time $O
  (2^d)$. To verify that $A'$ can be completed into a ladder, we compute the
  data structure $\DataStruct_{A'}$ in time $O(k 2^k + dk)$ using \Cref{lemma:count}.

  Finally, we verify that there exists a sequence of subsets
  $A'_1 \subset A'_2 \subset \ldots \subset A'_k = A'$ where $\DataStruct_{A'}[A'_k] > 0$ for all $i \in [k]$; as each lookup in~$\DataStruct_{A'}$ has cost equal to the size of the
  query set this will take time proportional to
  $
      \sum_{i=0}^k i {k \choose i} = k 2^{k-1}
  $
  in the worst case (where we have to query all subsets of~$A'$ before finding the sequence). Since~$k \defineq \floor{t/2}$, the total running time of this algorithm is
  \[
    O(2^d n) + O\Big(2^d n \cdot (k 2^k + dk) \cdot k 2^{k-1}\Big)
    = O\Big(2^d k 2^k (k 2^k + dk) \cdot n \Big).
  \]
  We can simplify this expression further by using that~$k \leq d$ which leads us to the claimed running time of~$O(d^2 8^d \cdot n)$
\end{proof}

\section{Implementation and experiments}
\label{sec:experiments}

Based on the above theoretical ideas, we implemented algorithms\footnote
{Source code available under \url
{https://github.com/microgravitas/mantis-shrimp/}} to compute the
VC-dimension, find the largest biclique, co-matchings (within an additive
error of 1) and ladder (within a factor~$2$). The last three algorithms
all simply check the left-neighbourhood for the respective structure. Aside
from  optimisations of the involved data structures we will not describe
these algorithms in further detail.

We observe that for practical purposes the data structure $R$ can be computed
progressively: if we know that our algorithm currently only needs to compute
$\DataStruct_S$ from~$R$ (as per \Cref{lemma:count}) with $|S| = k$ ($k \leq d$), then it is enough to
only count sets of size $\leq k$ in~$R$. We can achieve this in time $O(
{d \choose k} n)$, which is far preferable to using $O(2^d n)$ time to
insert all left-neighbourhood subsets into $R$. If $k$ remains much smaller
than~$d$, this improves our running time and space consumption
substantially.

The second important optimisation regards subset dictionaries. While tries are
useful in our theoretical analysis, in practice we opted to use bitsets for
the data structures $\DataStruct_S$, as their universe $S$ can assumed to be small.
Bitsets also allow for a very concise and fast implementation of the
fast \Mobius inversion, which needs to happen very frequently inside the hot
loop of the search algorithms.

The algorithm to compute the VC-dimension includes a few simple optimisations
that vastly improved its performance. Note that if we are currently searching
for a shattered set of size~$k$, then a candidate vertex for a shattered set
of size~$k$ must have at least~${k-1 \choose i-1}$ neighbours of degree at
least~$i$, for~$1 \leq i \leq k-1$. Our algorithm recomputes the set of
remaining candidates each time it finds a larger shattered set. The
(progressive) computation of the data structure $R$ can then also be
restricted to only those left-neighbourhoods subsets which only contain
candidate vertices.

Accordingly, the algorithm performs well if it finds large shattered sets
fast. To that end, it first only looks at $k$-subsets of left-neigbhourhoods
of single vertices. Once that search is exhausted, it considers
left-neighbourhoods of pairs, then triplets, \etc up to $\lceil \log
d+1 \rceil$ vertices (as per \Cref{lemma:shattered-small-lc}).
As this search is very expensive once we need to consider the joint
left-neighbourhood of several vertices, the algorithm estimates the work
needed and compares it against simply brute-forcing all $k$-subsets of the
remaining candidates. Since the number of candidates shrinks quite quickly in
practice, the algorithm usually concludes with such a final exhaustive
search.

\subsection{Results}

\begin{figure}[thb]
  \centering
  \includegraphics[scale=.50]{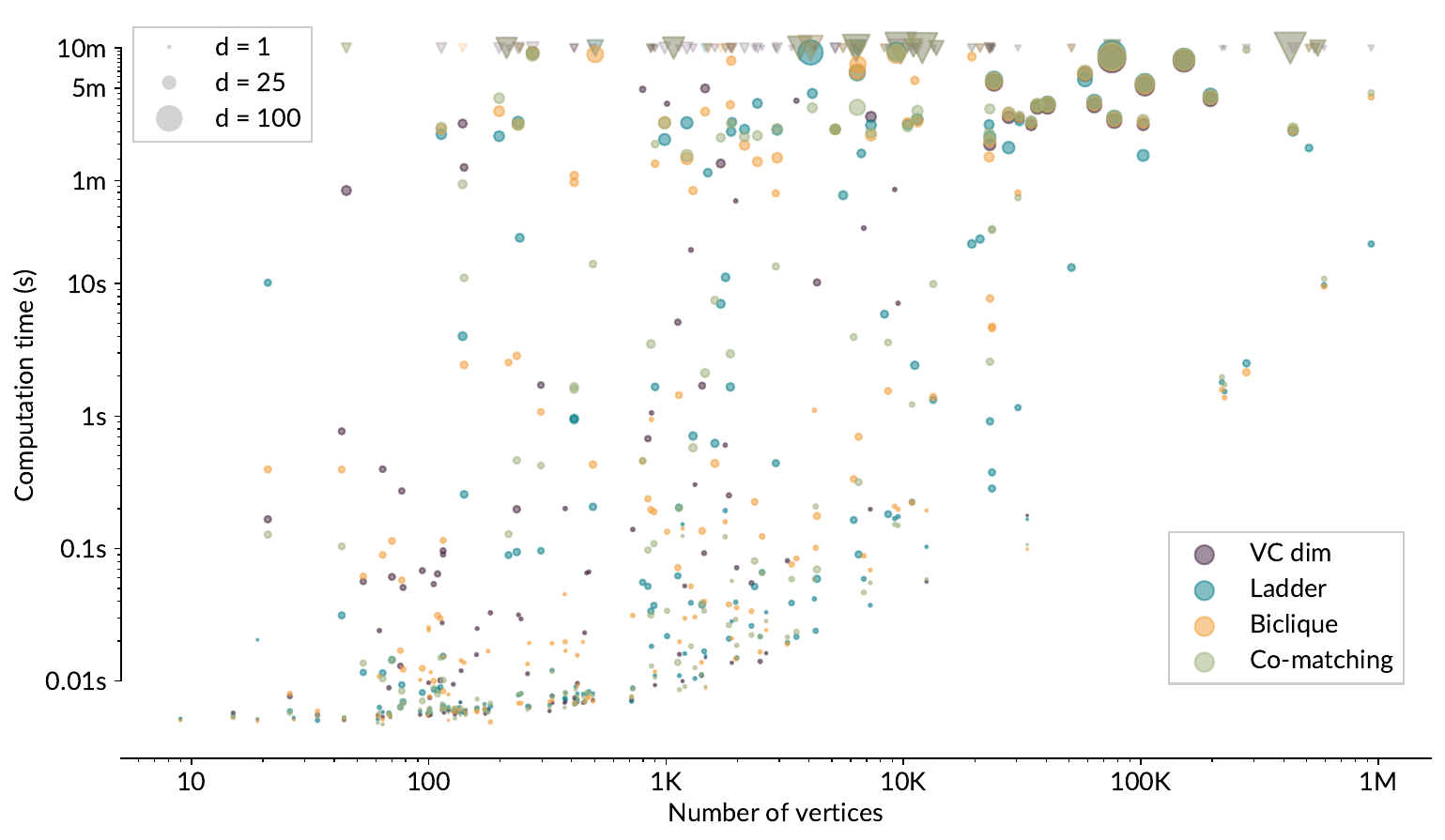}
  \caption{\label{fig:running-times}Running times of all four algorithms on a collection of~206 networks.
    The size of the circles indicates the degeneracy of the networks, triangles indicate that program timed out on the network after~10 minutes.
  }
\end{figure}

We implemented all four algorithms in Rust and tested them on a diverse
collection of~206 networks\footnote{\url
{https://github.com/microgravitas/network-corpus}}, using a PC with a AMD
Ryzen 3 2200G CPU and 24 GB RAM. The primary goal of our experiments was to
verify that the data structures and algorithms in this paper could be of
practical use, therefore we ran each algorithm only once per network\footnote
{The variance in running times was on the orders of seconds.} and timed out
after 10 minutes.

Of all the four measures, computing the VC-dimension is, unsuprisingly, the most computationally
challenging and the program timed out or ran out of memory for networks larger than a few ten-thousand nodes or of degeneracy higher than 24. The broad summary of the results looks as follows:
\begin{center}
\begin{tabular}{llll}
  Statistics & Completed & Max size ($n$) & Max degeneracy \\ \midrule
  VC-dimension & 126 & 33266 (\texttt{BioGrid-Chemicals}) & 24 (\texttt{wafa-eies}) \\
  Biclique & 176 & 935591 (\texttt{teams}) & 191 (\texttt{BioGrid-All}) \\
  Co-matching & 179 & 935591 (\texttt{teams}) & 255 (\texttt{dogster\_friendships}) \\
  Ladder index & 187 & 935591 (\texttt{teams}) & 191 (\texttt{BioGrid-All}) \\
\end{tabular}
\end{center}
\Cref{fig:running-times} visualizes these results in more detail.

\begin{figure}[htb]
  \centering
  \vspace*{-12pt}
  \includegraphics[scale=.55]{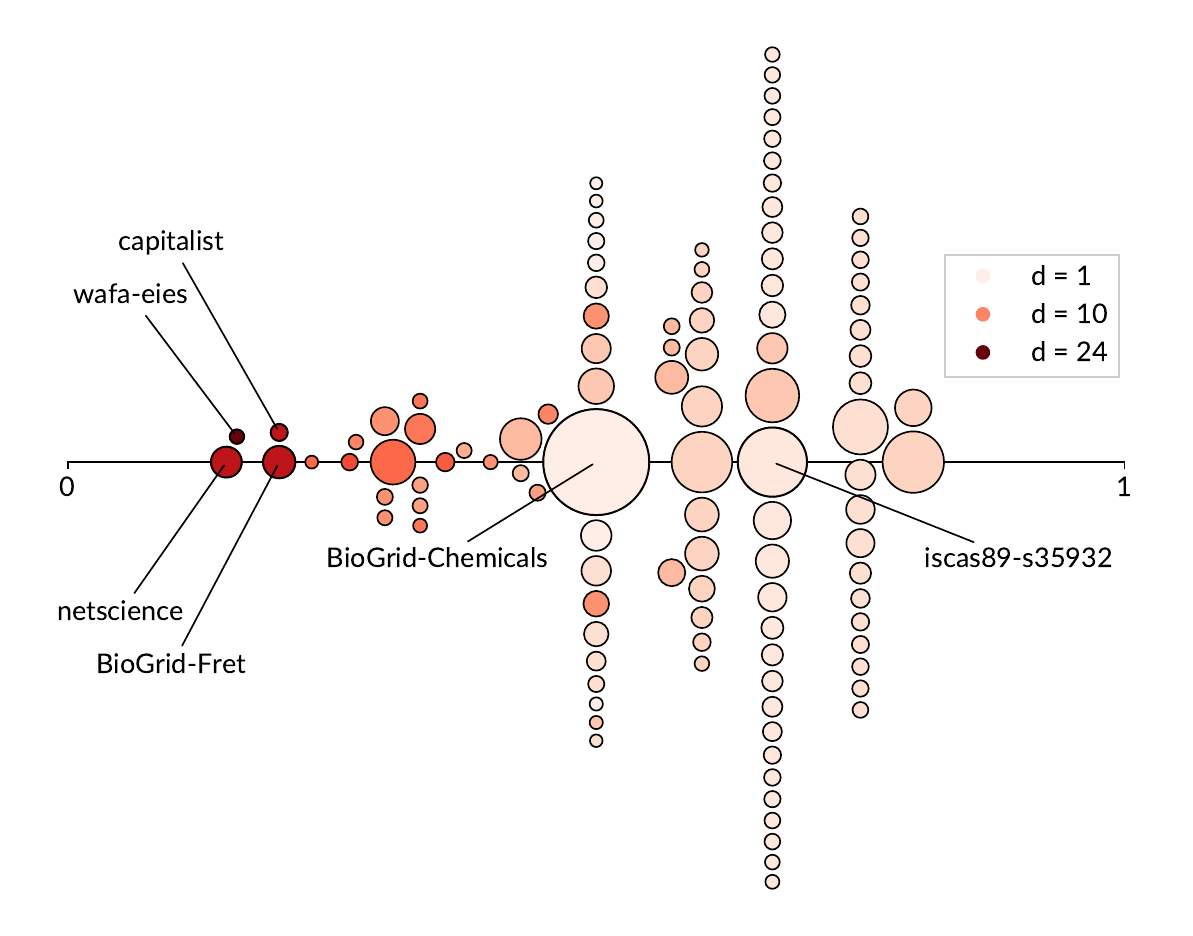}
  \vspace*{-16pt}
  \caption{\label{fig:vc-dim-degen}VC-dimension of networks normalized by their degeneracy${}+ 1$. Networks with large degeneracy tend towards the left, meaning that the VC-dimension does not increase proportionally to the degeneracy.
  }
\end{figure}

We are also interested in typical values of the VC-dimension of networks and
how it compares to the degeneracy. This topic deserves a deeper
investigation, but we can report some preliminary results here for those
networks where our program terminated before the timeout. In \Cref{fig:vc-dim-degen} we normalised the VC-dimension by the degeneracy-plus-one,
so values close to one indicate that the VC-dimension is on the order of the
degeneracy while values close to zero indicate that it is much
smaller than the degeneracy. We see a clear tendency that networks with
larger degeneracy tend towards zero, which we interpret as the VC-dimension ``growing slower'' than the degeneracy in typical networks.

\section{Conclusion}
\label{sec:conclusion}

On the theoretical side,
we outlined a general bipartite pattern-finding and -counting algorithm in
degenerate graphs. Its running time crucially depends on two complexity
measures of patterns, namely the left-covering number and the half-ordering
asymmetry. These general algorithms can be further improved for specific
patterns, which we exemplify for shattered set, ladder, co-matching and
biclique patterns.  Our results also include improved running times  when the input  graphs are of bounded degeneracy.

On the experimental side, we demonstrate that this style of algorithm is
feasible and practical for computation on real-world networks, which often
exhibit low degeneracy. The experiments also suggest that the VC-dimension of
networks tends to be a very small parameter, which makes it an interesting
target for the development of fast algorithms that exploit low VC-dimension.

\clearpage

 \appendix
\addtocontents{toc}{\protect\setcounter{tocdepth}{0}}

 \section{Complete results}

 \begin{landscape}
 \def\BioGrid{\textcolor{gray}{BG}} \def\HIV{\textcolor{gray}{HIV}} \def\HPV{\textcolor{gray}{HPV}}\def\EColi{\textcolor{gray}{E.-Coli}} \def\AC{\textcolor{gray}{AC}} \def\EAT{\textcolor{gray}{edinburgh\_assoc.\_thes.}} \def\ATC{\textcolor{gray}{ATC}} \def\DDiscoideum{\textcolor{gray}{D.-Discoideum}} \def\CAlbicans{\textcolor{gray}{C.-Albicans}} \def\CElegans{\textcolor{gray}{C.-Elegans}} \def\DMelanogaster{\textcolor{gray}{D.-Melanogaster}} \def\ENidulans{\textcolor{gray}{E.-Nidulans}} \def\Fly{\textcolor{gray}{bn-fly-d.\_medulla\_1}} \def\HSV{\textcolor{gray}{HSV}} \def\Dosage{\textcolor{gray}{D.}} 

\noindent
For space reasons, some of the network names are abbreviated below. Abbreviated names are marked in gray.

\begingroup\small
\begin{longtable}{@{}>{\small}llrrrrrllllllll@{}}
\toprule
 &          &          &          &     &     & \multicolumn{4}{c}{Running time} & \multicolumn{4}{c}{Statistics} \\
 & $n$ & $m$ & $\delta$ & $\bar d$ & deg & $\Delta$ & VC dim & biclique & crown  & ladder & VC dim & biclique & crown & ladder \\
Network &  &  &  &  &  &  &  &  &  &  &  &  &  &  \\
\midrule
\endfirsthead
\toprule
 &          &          &          &     &     & \multicolumn{4}{c}{Running time} & \multicolumn{4}{c}{Statistics} \\
 Network  & $n$ & $m$ & $\delta$ & $\bar d$ & deg & $\Delta$ & VC dim & biclique & crown  & ladder & VC dim & biclique & crown & ladder \\
\midrule
\endhead
\midrule
\multicolumn{15}{r}{Continued on next page} \\
\midrule
\endfoot
\bottomrule
\endlastfoot
AS-oregon-1 & 11174 & 23409 & 2389 & 4.2 & 17 & 2389 & 600.09 & 342.08 & 175.43 & 2.42 & [5,18] & 12 & [13,14] & [17,35] \\
AS-oregon-2 & 11461 & 32730 & 2432 & 5.7 & 31 & 2432 & 600.40 & 167.57 & 201.43 & 174.27 & [6,32] & [7,31] & [7,32] & [7,63] \\
\BioGrid-\AC-Luminescence & 1840 & 2312 & 376 & 2.5 & 6 & 376 & 0.25 & 0.04 & 0.03 & 0.02 & 4 & 5 & 7 & [6,13] \\
\BioGrid-\AC-Ms & 40495 & 321887 & 2217 & 15.9 & 58 & 2217 & 217.75 & 224.87 & 227.38 & 227.77 & [4,59] & [4,58] & [4,59] & [4,117] \\
\BioGrid-\AC-Rna & 13765 & 42815 & 3572 & 6.2 & 54 & 3572 & 600.21 & 601.05 & 601.38 & 601.54 & [4,55] & [5,54] & [5,55] & [5,109] \\
\BioGrid-\AC-Western & 21028 & 64046 & 535 & 6.1 & 17 & 535 & 600.08 & 600.30 & 600.29 & 21.70 & [5,18] & [7,17] & [9,18] & [17,35] \\
\BioGrid-All & 75550 & 1316843 & 3620 & 34.9 & 191 & 3620 & 496.99 & 514.30 & 519.60 & 540.54 & [2,192] & [2,191] & [2,192] & [2,383] \\
\BioGrid-\ATC & 10417 & 47916 & 1341 & 9.2 & 26 & 1341 & 600.25 & 163.89 & 152.58 & 159.18 & [6,27] & [8,26] & [8,27] & [8,53] \\
\BioGrid-Biochemical-Activity & 8620 & 17746 & 427 & 4.1 & 11 & 427 & 600.11 & 1.55 & 3.59 & 0.18 & [5,12] & 8 & [7,8] & [11,23] \\
\BioGrid-Bos-Taurus & 454 & 424 & 27 & 1.9 & 3 & 27 & 0.02 & 0.01 & 0.01 & 0.01 & 2 & 3 & [3,4] & [3,7] \\
\BioGrid-\CElegans & 6394 & 23646 & 522 & 7.4 & 64 & 522 & 600.21 & 399.05 & 214.18 & 390.57 & [4,65] & [6,64] & [5,65] & [6,129] \\
\BioGrid-\CAlbicans-Sc5314 & 1121 & 1609 & 427 & 2.9 & 9 & 427 & 5.11 & 0.07 & 0.01 & 0.06 & 3 & 6 & 10 & [9,19] \\
\BioGrid-Canis-Familiaris & 143 & 125 & 90 & 1.7 & 2 & 90 & 0.01 & 0.01 & 0.01 & 0.01 & 2 & 2 & 3 & [2,5] \\
\BioGrid-Chemicals & 33266 & 28093 & 413 & 1.7 & 1 & 413 & 0.18 & 0.10 & 0.11 & 0.17 & 1 & [0,1] & 2 & [0,3] \\
\BioGrid-Co-Crystal-Structure & 2291 & 2021 & 92 & 1.8 & 5 & 92 & 0.05 & 0.03 & 0.02 & 0.03 & 3 & 3 & 6 & [3,7] \\
\BioGrid-Co-Fractionation & 11017 & 56354 & 187 & 10.2 & 83 & 187 & 600.67 & 602.19 & 601.34 & 601.14 & [4,84] & [5,83] & [5,84] & [5,167] \\
\BioGrid-Co-Localization & 3543 & 4452 & 63 & 2.5 & 6 & 63 & 240.56 & 0.08 & 0.02 & 0.02 & 3 & 5 & 7 & [6,13] \\
\BioGrid-Co-Purification & 4326 & 5970 & 1972 & 2.8 & 12 & 1972 & 10.21 & 0.18 & 0.07 & 0.06 & 4 & 6 & 13 & [12,25] \\
\BioGrid-Cricetulus-Griseus & 69 & 57 & 30 & 1.7 & 1 & 30 & 0.01 & 0.01 & 0.01 & 0.01 & 1 & [0,1] & 2 & [0,3] \\
\BioGrid-Danio-Rerio & 261 & 266 & 61 & 2.0 & 3 & 61 & 0.01 & 0.01 & 0.01 & 0.01 & 2 & 2 & 4 & [3,7] \\
\BioGrid-\DDiscoideum-Ax4 & 27 & 20 & 4 & 1.5 & 1 & 4 & 0.01 & 0.01 & 0.01 & 0.01 & 1 & [0,1] & 2 & [0,3] \\
\BioGrid-\Dosage-Growth-Defect & 1447 & 2193 & 213 & 3.0 & 5 & 213 & 0.09 & 0.04 & 0.03 & 0.02 & 4 & 4 & 6 & [5,11] \\
\BioGrid-\Dosage-Lethality & 1776 & 2289 & 392 & 2.6 & 4 & 392 & 0.60 & 0.16 & 0.12 & 0.19 & 3 & 4 & [4,5] & [4,9] \\
\BioGrid-\Dosage-Rescue & 3380 & 6444 & 75 & 3.8 & 7 & 75 & 600.01 & 0.08 & 0.06 & 0.04 & [4,8] & 6 & [7,8] & [7,15] \\
\BioGrid-\DMelanogaster & 9330 & 60556 & 303 & 13.0 & 83 & 303 & 600.32 & 538.71 & 553.50 & 566.57 & [4,84] & [5,83] & [5,84] & [5,167] \\
\BioGrid-\ENidulans-Fgsc-A4 & 64 & 62 & 44 & 1.9 & 2 & 44 & 0.01 & 0.01 & 0.00 & 0.01 & 2 & 2 & 3 & [2,5] \\
\BioGrid-\EColi-K12-Mg1655 & 1273 & 1889 & 58 & 3.0 & 5 & 58 & 17.95 & 0.05 & 0.02 & 0.04 & 3 & 4 & 6 & [5,11] \\
\BioGrid-\EColi-K12-W3110 & 4063 & 181620 & 1187 & 89.4 & 159 & 1187 & 600.95 & 600.94 & 600.90 & 556.39 & [3,160] & [3,159] & [3,160] & [3,319] \\
\BioGrid-Far-Western & 1199 & 1089 & 60 & 1.8 & 3 & 60 & 0.05 & 0.03 & 0.01 & 0.02 & 3 & 3 & 4 & [3,7] \\
\BioGrid-Fret & 1700 & 2395 & 51 & 2.8 & 19 & 51 & 80.71 & 600.09 & 126.45 & 7.04 & 4 & [11,19] & [17,18] & [19,39] \\
\BioGrid-Gallus-Gallus & 413 & 436 & 110 & 2.1 & 4 & 110 & 0.01 & 0.01 & 0.01 & 0.01 & 3 & 3 & [4,5] & [4,9] \\
\BioGrid-Glycine-Max & 44 & 39 & 13 & 1.8 & 2 & 13 & 0.01 & 0.01 & 0.01 & 0.01 & 2 & 2 & 3 & [2,5] \\
\BioGrid-Hepatitus-C-Virus & 136 & 134 & 133 & 2.0 & 1 & 133 & 0.01 & 0.01 & 0.01 & 0.01 & 1 & [0,1] & 2 & [0,3] \\
\BioGrid-Homo-Sapiens & 24093 & 369767 & 2882 & 30.7 & 71 & 2882 & 329.05 & 338.13 & 343.83 & 346.77 & [3,72] & [3,71] & [3,72] & [3,143] \\
\BioGrid-\HSV-1 & 178 & 208 & 40 & 2.3 & 3 & 40 & 0.01 & 0.02 & 0.01 & 0.01 & 3 & 3 & 4 & [3,7] \\
\BioGrid-\HSV-4 & 323 & 326 & 154 & 2.0 & 2 & 154 & 0.01 & 0.01 & 0.01 & 0.01 & 2 & 2 & [2,3] & [2,5] \\
\BioGrid-\HSV-5 & 121 & 107 & 27 & 1.8 & 1 & 27 & 0.01 & 0.01 & 0.01 & 0.01 & 1 & [0,1] & 2 & [0,3] \\
\BioGrid-\HSV-8 & 716 & 691 & 119 & 1.9 & 3 & 119 & 0.01 & 0.01 & 0.01 & 0.01 & 2 & 3 & [3,4] & [3,7] \\
\BioGrid-\HIV-1 & 1138 & 1319 & 324 & 2.3 & 3 & 324 & 0.02 & 0.02 & 0.01 & 0.01 & 3 & 3 & 4 & [3,7] \\
\BioGrid-\HIV-2 & 19 & 15 & 6 & 1.6 & 1 & 6 & 0.01 & 0.00 & 0.01 & 0.02 & 1 & [0,1] & 2 & [0,3] \\
\BioGrid-\HPV-16 & 173 & 186 & 93 & 2.2 & 2 & 93 & 0.01 & 0.01 & 0.01 & 0.01 & 2 & 2 & [2,3] & [2,5] \\
Cannes2013 & 438089 & 835892 & 15169 & 3.8 & 27 & 15169 & 600.79 & 144.70 & 149.74 & 142.22 & [5,28] & [6,27] & [6,28] & [6,55] \\
CoW-interstate & 182 & 319 & 25 & 3.5 & 4 & 25 & 0.03 & 0.00 & 0.01 & 0.01 & 3 & 4 & [3,4] & [4,9] \\
DNC-emails & 1866 & 4384 & 402 & 4.7 & 17 & 402 & 600.11 & 224.12 & 2.95 & 1.66 & [5,18] & 10 & [16,17] & [17,35] \\
EU-email-core & 986 & 16064 & 345 & 32.6 & 34 & 345 & 600.86 & 164.82 & 163.41 & 122.48 & [5,35] & [6,34] & [6,35] & [6,69] \\
JDK\_dependency & 6434 & 53658 & 5923 & 16.7 & 65 & 5923 & 600.41 & 453.02 & 600.99 & 601.73 & [4,66] & [5,65] & [5,66] & [5,131] \\
JUNG-javax & 6120 & 50290 & 5655 & 16.4 & 65 & 5655 & 600.35 & 600.93 & 600.95 & 602.14 & [4,66] & [5,65] & [5,66] & [5,131] \\
NYClimateMarch2014 & 102378 & 327080 & 14687 & 6.4 & 34 & 14687 & 158.65 & 165.32 & 170.79 & 92.84 & [5,35] & [5,34] & [5,35] & [4,69] \\
NZ\_legal & 2141 & 15739 & 429 & 14.7 & 25 & 429 & 600.36 & 110.96 & 128.55 & 146.23 & [6,26] & [7,25] & [7,26] & [7,51] \\
Noordin-terror-loc & 127 & 190 & 18 & 3.0 & 3 & 18 & 0.01 & 0.01 & 0.01 & 0.01 & 3 & 3 & [3,4] & [3,7] \\
Noordin-terror-orgas & 129 & 181 & 21 & 2.8 & 3 & 21 & 0.01 & 0.01 & 0.01 & 0.01 & 3 & 3 & [3,4] & [3,7] \\
Noordin-terror-relation & 70 & 251 & 28 & 7.2 & 11 & 28 & 0.06 & 0.11 & 0.01 & 0.01 & 4 & 6 & 12 & [11,23] \\
ODLIS & 2900 & 16377 & 592 & 11.3 & 12 & 592 & 600.03 & 48.04 & 13.49 & 0.44 & [5,13] & 6 & [9,10] & [12,25] \\
Opsahl-forum & 899 & 7036 & 128 & 15.7 & 14 & 128 & 600.02 & 80.31 & 113.04 & 1.66 & [5,15] & 5 & [6,7] & [14,29] \\
Opsahl-socnet & 1899 & 13838 & 255 & 14.6 & 20 & 255 & 600.12 & 600.33 & 600.29 & 166.27 & [6,21] & [7,20] & [8,21] & [20,41] \\
StackOverflow-tags & 115 & 245 & 16 & 4.3 & 6 & 16 & 0.09 & 0.01 & 0.01 & 0.01 & 3 & 4 & 7 & [6,13] \\
Y2H\_union & 1966 & 2705 & 89 & 2.8 & 4 & 89 & 42.11 & 0.01 & 0.03 & 0.04 & 3 & 4 & 5 & [4,9] \\
Yeast & 2361 & 7182 & 66 & 6.1 & 10 & 66 & 600.04 & 0.23 & 0.08 & 0.05 & [4,11] & 7 & [9,10] & [10,21] \\
actor\_movies & 511463 & 1470404 & 646 & 5.7 & 14 & 646 & 600.36 & 600.51 & 600.46 & 105.76 & [5,15] & [7,14] & [6,15] & [14,29] \\
advogato & 5155 & 39285 & 803 & 15.2 & 25 & 803 & 600.41 & 145.81 & 146.46 & 146.33 & [5,26] & [6,25] & [6,26] & [6,51] \\
airlines & 235 & 1297 & 130 & 11.0 & 13 & 130 & 0.20 & 2.85 & 0.46 & 0.09 & 5 & 8 & [11,12] & [13,27] \\
american\_revolution & 141 & 160 & 59 & 2.3 & 3 & 59 & 0.01 & 0.01 & 0.01 & 0.01 & 3 & 3 & [2,3] & [3,7] \\
as-22july06 & 22963 & 48436 & 2390 & 4.2 & 25 & 2390 & 600.20 & 90.18 & 137.08 & 158.60 & [6,26] & [8,25] & [10,26] & [10,51] \\
as20000102 & 6474 & 12572 & 1458 & 3.9 & 12 & 1458 & 600.07 & 0.70 & 0.32 & 0.09 & [5,13] & 9 & [10,11] & [12,25] \\
autobahn & 374 & 478 & 5 & 2.6 & 2 & 5 & 0.01 & 0.05 & 0.01 & 0.01 & 2 & 2 & 3 & [2,5] \\
bahamas & 219856 & 246291 & 14902 & 2.2 & 6 & 14902 & 600.12 & 1.59 & 1.97 & 1.80 & [3,7] & 6 & [3,4] & [6,13] \\
bergen & 53 & 272 & 32 & 10.3 & 9 & 32 & 0.06 & 0.06 & 0.01 & 0.01 & 4 & 6 & [8,9] & [9,19] \\
bitcoin-otc-negative & 1606 & 3259 & 227 & 4.1 & 16 & 227 & 600.05 & 0.44 & 7.48 & 0.62 & [4,17] & 16 & [5,6] & [16,33] \\
bitcoin-otc-positive & 5573 & 18591 & 788 & 6.7 & 20 & 788 & 600.14 & 600.45 & 600.24 & 46.50 & [6,21] & [8,20] & [10,21] & [20,41] \\
\Fly & 1781 & 8911 & 927 & 10.0 & 18 & 927 & 600.04 & 600.17 & 600.15 & 11.17 & [5,19] & [8,18] & [8,19] & [18,37] \\
bn-mouse\_retina\_1 & 1076 & 90811 & 744 & 168.8 & 121 & 744 & 600.37 & 600.48 & 600.42 & 600.68 & [3,122] & [3,121] & [3,122] & [3,243] \\
boards\_gender\_1m & 4134 & 19993 & 88 & 9.7 & 25 & 88 & 600.05 & 600.30 & 212.44 & 272.75 & [4,26] & [13,25] & 26 & [25,51] \\
boards\_gender\_2m & 4220 & 5598 & 45 & 2.7 & 4 & 45 & 600.10 & 1.10 & 0.06 & 0.04 & [3,5] & 4 & [3,4] & [4,9] \\
ca-CondMat & 23133 & 93439 & 279 & 8.1 & 25 & 279 & 600.05 & 600.65 & 208.73 & 600.65 & [4,26] & [14,25] & 26 & [18,51] \\
ca-HepPh & 12006 & 118489 & 491 & 19.7 & 238 & 491 & 600.21 & 600.33 & 600.13 & 600.17 & [3,239] & [3,238] & [3,239] & [3,477] \\
capitalist & 139 & 1071 & 91 & 15.4 & 19 & 91 & 161.57 & 600.32 & 56.29 & 4.01 & 4 & [12,19] & [17,18] & [19,39] \\
celegans & 297 & 2148 & 134 & 14.5 & 10 & 134 & 1.72 & 1.08 & 0.42 & 0.10 & 5 & 7 & [8,9] & [10,21] \\
chess & 7301 & 55899 & 181 & 15.3 & 29 & 181 & 182.00 & 130.97 & 138.03 & 157.14 & [6,30] & [6,29] & [6,30] & [6,59] \\
chicago & 1467 & 1298 & 12 & 1.8 & 1 & 12 & 0.02 & 0.01 & 0.01 & 0.02 & 1 & [0,1] & 2 & [0,3] \\
cit-HepPh & 34546 & 420877 & 846 & 24.4 & 30 & 846 & 156.96 & 163.15 & 169.08 & 164.14 & [4,31] & [4,30] & [4,31] & [4,61] \\
cit-HepTh & 27769 & 352285 & 2468 & 25.4 & 37 & 2468 & 180.07 & 189.90 & 194.29 & 106.40 & [4,38] & [4,37] & [4,38] & [3,75] \\
codeminer & 724 & 1015 & 55 & 2.8 & 4 & 55 & 0.14 & 0.03 & 0.01 & 0.01 & 3 & 4 & [4,5] & [4,9] \\
columbia-mobility & 863 & 4147 & 228 & 9.6 & 9 & 228 & 600.11 & 0.20 & 0.03 & 0.03 & [4,10] & 6 & 10 & [9,19] \\
columbia-social & 863 & 7724 & 545 & 17.9 & 18 & 545 & 600.11 & 600.29 & 3.51 & 600.13 & [4,19] & [12,18] & [18,19] & [14,37] \\
cora\_citation & 23166 & 89157 & 377 & 7.7 & 13 & 377 & 600.11 & 7.72 & 2.57 & 0.91 & [5,14] & 9 & [10,11] & [13,27] \\
countries & 592414 & 624402 & 110602 & 2.1 & 6 & 110602 & 600.14 & 9.49 & 10.88 & 9.75 & [4,7] & 5 & [4,5] & [6,13] \\
cpan-authors & 839 & 2112 & 327 & 5.0 & 9 & 327 & 0.68 & 0.24 & 0.10 & 0.05 & 5 & 6 & [8,9] & [9,19] \\
digg & 30398 & 86312 & 285 & 5.7 & 9 & 285 & 600.07 & 48.32 & 44.61 & 1.16 & [4,10] & 4 & [5,6] & [9,19] \\
diseasome & 1419 & 2738 & 84 & 3.9 & 11 & 84 & 1.70 & 0.14 & 0.04 & 0.04 & 4 & 6 & 12 & [11,23] \\
dogster\_friendships & 426820 & 8546581 & 46505 & 40.0 & 255 & 46505 & 600.38 & 600.57 & 600.00 & 600.44 & [1,256] & [0,255] & [1,256] & [0,511] \\
dolphins & 62 & 159 & 12 & 5.1 & 4 & 12 & 0.02 & 0.01 & 0.01 & 0.01 & 3 & 3 & 5 & [4,9] \\
ecoli-transcript & 423 & 578 & 74 & 2.7 & 3 & 74 & 0.01 & 0.02 & 0.01 & 0.01 & 3 & 3 & [3,4] & [3,7] \\
\EAT & 23132 & 297094 & 1062 & 25.7 & 34 & 1062 & 112.09 & 119.67 & 123.49 & 128.27 & [3,35] & [3,34] & [3,35] & [3,69] \\
email-Enron & 36692 & 183831 & 1383 & 10.0 & 43 & 1383 & 213.38 & 219.83 & 223.53 & 219.47 & [4,44] & [4,43] & [4,44] & [4,87] \\
euroroad & 1174 & 1417 & 10 & 2.4 & 2 & 10 & 0.01 & 0.14 & 0.12 & 0.15 & 2 & 2 & 3 & [2,5] \\
eva-corporate & 7253 & 6711 & 552 & 1.9 & 3 & 552 & 0.20 & 0.07 & 0.06 & 0.04 & 3 & 3 & 4 & [3,7] \\
exnet-water & 1893 & 2416 & 10 & 2.6 & 2 & 10 & 0.01 & 0.02 & 0.06 & 0.03 & 2 & 2 & 3 & [2,5] \\
facebook-links & 63731 & 817090 & 1098 & 25.6 & 52 & 1098 & 221.29 & 229.98 & 233.75 & 237.25 & [3,53] & [3,52] & [3,53] & [3,105] \\
foldoc & 13356 & 91471 & 728 & 13.7 & 12 & 728 & 600.10 & 1.39 & 9.93 & 1.33 & [5,13] & 12 & [9,10] & [12,25] \\
foodweb-caribbean & 492 & 3313 & 196 & 13.5 & 13 & 196 & 600.10 & 0.43 & 14.08 & 0.21 & [4,14] & 12 & [7,8] & [13,27] \\
foodweb-otago & 141 & 832 & 45 & 11.8 & 14 & 45 & 75.41 & 2.43 & 11.06 & 0.26 & 4 & 12 & [7,8] & [14,29] \\
football & 115 & 613 & 12 & 10.7 & 8 & 12 & 0.10 & 0.12 & 0.01 & 0.02 & 4 & 4 & 9 & [8,17] \\
google+ & 23628 & 39194 & 2761 & 3.3 & 12 & 2761 & 600.08 & 4.60 & 25.37 & 0.38 & [6,13] & 9 & [7,8] & [12,25] \\
gowalla & 196591 & 950327 & 14730 & 9.7 & 51 & 14730 & 246.26 & 254.67 & 256.89 & 263.96 & [3,52] & [3,51] & [3,52] & [3,103] \\
haggle & 274 & 2124 & 101 & 15.5 & 39 & 101 & 600.11 & 551.34 & 533.96 & 550.92 & [4,40] & [8,39] & [8,40] & [8,79] \\
hex & 331 & 930 & 6 & 5.6 & 3 & 6 & 0.01 & 0.02 & 0.01 & 0.01 & 3 & 2 & [3,4] & [3,7] \\
hypertext\_2009 & 113 & 2196 & 98 & 38.9 & 28 & 98 & 600.15 & 146.97 & 150.88 & 134.22 & [5,29] & [9,28] & [9,29] & [9,57] \\
ia-email-univ & 1133 & 5451 & 71 & 9.6 & 11 & 71 & 600.06 & 1.44 & 0.20 & 0.20 & [4,12] & 6 & 12 & [11,23] \\
ia-infect-dublin & 410 & 2765 & 50 & 13.5 & 17 & 50 & 600.10 & 65.59 & 1.60 & 0.94 & [4,18] & 9 & [16,17] & [17,35] \\
ia-reality & 6809 & 7680 & 261 & 2.3 & 5 & 261 & 26.27 & 0.09 & 0.05 & 0.06 & 4 & 4 & [5,6] & [5,11] \\
infectious & 410 & 2765 & 50 & 13.5 & 17 & 50 & 600.10 & 58.09 & 1.67 & 0.96 & [4,18] & 9 & [16,17] & [17,35] \\
iscas89-s1196 & 377 & 537 & 16 & 2.8 & 2 & 16 & 0.01 & 0.02 & 0.01 & 0.01 & 2 & 2 & [2,3] & [2,5] \\
iscas89-s1238 & 416 & 625 & 18 & 3.0 & 2 & 18 & 0.01 & 0.01 & 0.01 & 0.01 & 2 & 2 & [2,3] & [2,5] \\
iscas89-s13207 & 2492 & 3406 & 37 & 2.7 & 4 & 37 & 0.01 & 0.02 & 0.02 & 0.02 & 4 & 4 & [4,5] & [4,9] \\
iscas89-s1423 & 423 & 554 & 17 & 2.6 & 2 & 17 & 0.01 & 0.01 & 0.01 & 0.01 & 2 & 2 & [2,3] & [2,5] \\
iscas89-s1488 & 463 & 779 & 53 & 3.4 & 3 & 53 & 0.07 & 0.01 & 0.01 & 0.01 & 3 & 3 & [3,4] & [3,7] \\
iscas89-s1494 & 473 & 796 & 56 & 3.4 & 3 & 56 & 0.07 & 0.01 & 0.01 & 0.01 & 3 & 3 & [3,4] & [3,7] \\
iscas89-s15850 & 3247 & 4004 & 25 & 2.5 & 4 & 25 & 0.08 & 0.02 & 0.02 & 0.02 & 3 & 4 & [3,4] & [4,9] \\
iscas89-s208 & 61 & 67 & 8 & 2.2 & 2 & 8 & 0.01 & 0.01 & 0.00 & 0.01 & 2 & 2 & [2,3] & [2,5] \\
iscas89-s27 & 9 & 8 & 3 & 1.8 & 1 & 3 & 0.01 & 0.01 & 0.01 & 0.01 & 1 & [0,1] & 2 & [0,3] \\
iscas89-s298 & 92 & 131 & 11 & 2.8 & 2 & 11 & 0.01 & 0.01 & 0.01 & 0.01 & 2 & 2 & [2,3] & [2,5] \\
iscas89-s344 & 100 & 122 & 9 & 2.4 & 2 & 9 & 0.01 & 0.02 & 0.01 & 0.01 & 2 & 2 & [2,3] & [2,5] \\
iscas89-s349 & 102 & 127 & 9 & 2.5 & 2 & 9 & 0.01 & 0.01 & 0.01 & 0.01 & 2 & 2 & [2,3] & [2,5] \\
iscas89-s35932 & 12515 & 15961 & 1440 & 2.6 & 2 & 1440 & 0.06 & 0.19 & 0.06 & 0.10 & 2 & [0,1] & 3 & [2,5] \\
iscas89-s382 & 116 & 168 & 18 & 2.9 & 2 & 18 & 0.01 & 0.02 & 0.01 & 0.01 & 2 & 2 & [2,3] & [2,5] \\
iscas89-s38417 & 9500 & 10635 & 39 & 2.2 & 4 & 39 & 7.11 & 0.20 & 0.15 & 0.17 & 4 & 2 & [4,5] & [4,9] \\
iscas89-s38584 & 9193 & 12573 & 54 & 2.7 & 4 & 54 & 51.37 & 0.21 & 0.15 & 0.17 & 3 & 4 & [3,4] & [4,9] \\
iscas89-s386 & 114 & 200 & 23 & 3.5 & 3 & 23 & 0.03 & 0.01 & 0.01 & 0.01 & 2 & 3 & [2,3] & [3,7] \\
iscas89-s400 & 121 & 182 & 19 & 3.0 & 2 & 19 & 0.01 & 0.01 & 0.01 & 0.01 & 2 & 2 & [2,3] & [2,5] \\
iscas89-s420 & 129 & 145 & 9 & 2.2 & 2 & 9 & 0.01 & 0.01 & 0.01 & 0.01 & 2 & 2 & [2,3] & [2,5] \\
iscas89-s444 & 134 & 206 & 19 & 3.1 & 2 & 19 & 0.01 & 0.01 & 0.01 & 0.01 & 2 & 2 & 3 & [2,5] \\
iscas89-s510 & 172 & 251 & 12 & 2.9 & 2 & 12 & 0.01 & 0.01 & 0.01 & 0.01 & 2 & 2 & [2,3] & [2,5] \\
iscas89-s526 & 160 & 270 & 12 & 3.4 & 3 & 12 & 0.02 & 0.01 & 0.01 & 0.01 & 3 & 3 & [2,3] & [3,7] \\
iscas89-s526n & 159 & 268 & 12 & 3.4 & 3 & 12 & 0.02 & 0.01 & 0.01 & 0.01 & 3 & 3 & [3,4] & [3,7] \\
iscas89-s5378 & 1411 & 1639 & 10 & 2.3 & 3 & 10 & 0.01 & 0.01 & 0.01 & 0.01 & 3 & 2 & [3,4] & [3,7] \\
iscas89-s641 & 100 & 144 & 12 & 2.9 & 3 & 12 & 0.01 & 0.03 & 0.01 & 0.01 & 3 & 2 & [2,3] & [3,7] \\
iscas89-s713 & 137 & 180 & 12 & 2.6 & 3 & 12 & 0.01 & 0.01 & 0.01 & 0.01 & 3 & 2 & [2,3] & [3,7] \\
iscas89-s820 & 239 & 480 & 48 & 4.0 & 3 & 48 & 0.03 & 0.01 & 0.01 & 0.01 & 3 & 3 & [3,4] & [3,7] \\
iscas89-s832 & 245 & 498 & 49 & 4.1 & 3 & 49 & 0.03 & 0.01 & 0.01 & 0.01 & 3 & 3 & [3,4] & [3,7] \\
iscas89-s838 & 265 & 301 & 12 & 2.3 & 2 & 12 & 0.01 & 0.02 & 0.01 & 0.01 & 2 & 2 & [2,3] & [2,5] \\
iscas89-s9234 & 1985 & 2370 & 18 & 2.4 & 4 & 18 & 0.07 & 0.04 & 0.02 & 0.01 & 3 & 4 & [3,4] & [4,9] \\
iscas89-s953 & 332 & 454 & 12 & 2.7 & 2 & 12 & 0.01 & 0.01 & 0.01 & 0.01 & 2 & 2 & [2,3] & [2,5] \\
jazz & 198 & 2742 & 100 & 27.7 & 29 & 100 & 600.04 & 200.16 & 250.91 & 129.73 & [5,30] & [12,29] & [13,30] & [11,59] \\
karate & 34 & 78 & 17 & 4.6 & 4 & 17 & 0.01 & 0.01 & 0.01 & 0.01 & 3 & 3 & 5 & [4,9] \\
lederberg & 8324 & 41532 & 1103 & 10.0 & 15 & 1103 & 600.11 & 600.11 & 600.08 & 5.88 & [5,16] & [7,15] & [9,16] & [15,31] \\
lesmiserables & 77 & 254 & 36 & 6.6 & 9 & 36 & 0.27 & 0.06 & 0.01 & 0.01 & 3 & 6 & 10 & [9,19] \\
link-pedigree & 898 & 1125 & 14 & 2.5 & 2 & 14 & 0.01 & 0.01 & 0.01 & 0.01 & 2 & 2 & [2,3] & [2,5] \\
linux & 30834 & 213217 & 9338 & 13.8 & 23 & 9338 & 178.38 & 179.08 & 184.87 & 169.29 & [7,24] & [7,23] & [7,24] & [7,47] \\
livemocha & 104103 & 2193083 & 2980 & 42.1 & 92 & 2980 & 308.64 & 318.47 & 325.14 & 326.65 & [2,93] & [2,92] & [2,93] & [2,185] \\
loc-brightkite\_edges & 58228 & 214078 & 1134 & 7.4 & 52 & 1134 & 381.48 & 385.52 & 393.61 & 346.98 & [5,53] & [5,52] & [5,53] & [5,105] \\
location & 225486 & 293697 & 12189 & 2.6 & 5 & 12189 & 600.30 & 1.38 & 1.73 & 1.53 & [4,6] & 5 & [4,5] & [5,11] \\
marvel & 19428 & 96662 & 1625 & 10.0 & 18 & 1625 & 600.14 & 516.95 & 600.19 & 19.96 & [6,19] & 15 & [8,19] & [18,37] \\
mg\_casino & 109 & 326 & 94 & 6.0 & 9 & 94 & 0.06 & 0.03 & 0.01 & 0.01 & 3 & 5 & 10 & [9,19] \\
mg\_forrestgump & 94 & 271 & 89 & 5.8 & 8 & 89 & 0.07 & 0.01 & 0.01 & 0.01 & 3 & 4 & 9 & [8,17] \\
mg\_godfatherII & 78 & 219 & 34 & 5.6 & 8 & 34 & 0.05 & 0.01 & 0.01 & 0.01 & 3 & 5 & 9 & [8,17] \\
mg\_watchmen & 76 & 201 & 33 & 5.3 & 7 & 33 & 0.01 & 0.02 & 0.01 & 0.01 & 3 & 4 & 8 & [7,15] \\
minnesota & 2642 & 3303 & 5 & 2.5 & 2 & 5 & 0.02 & 0.02 & 0.03 & 0.03 & 2 & 2 & 3 & [2,5] \\
moreno\_health & 2539 & 10455 & 27 & 8.2 & 7 & 27 & 600.10 & 0.12 & 0.07 & 0.07 & [4,8] & 5 & [7,8] & [7,15] \\
mousebrain & 213 & 16089 & 205 & 151.1 & 111 & 205 & 600.28 & 600.47 & 600.38 & 600.43 & [3,112] & [3,111] & [3,112] & [3,223] \\
movielens\_1m & 9746 & 1000209 & 3428 & 205.3 & 255 & 3428 & 600.76 & 600.79 & 600.75 & 600.81 & [2,256] & [2,255] & [2,256] & [2,511] \\
movies & 101 & 192 & 19 & 3.8 & 3 & 19 & 0.01 & 0.01 & 0.01 & 0.01 & 3 & 2 & [3,4] & [3,7] \\
muenchen-bahn & 447 & 578 & 13 & 2.6 & 2 & 13 & 0.01 & 0.02 & 0.01 & 0.01 & 2 & [0,1] & [2,3] & [2,5] \\
munin & 1324 & 1397 & 66 & 2.1 & 3 & 66 & 0.30 & 0.03 & 0.01 & 0.01 & 2 & 3 & [2,3] & [3,7] \\
netscience & 1461 & 2742 & 34 & 3.8 & 19 & 34 & 297.79 & 198.83 & 2.11 & 600.05 & 3 & 10 & 20 & [12,39] \\
offshore & 278877 & 505965 & 37336 & 3.6 & 13 & 37336 & 600.11 & 2.14 & 583.20 & 2.50 & [5,14] & 13 & [9,10] & [13,27] \\
openflights & 2939 & 15677 & 242 & 10.7 & 28 & 242 & 600.11 & 89.14 & 147.54 & 144.73 & [5,29] & [7,28] & [7,29] & [7,57] \\
p2p-Gnutella04 & 10876 & 39994 & 103 & 7.4 & 7 & 103 & 600.11 & 0.22 & 1.22 & 0.22 & [4,8] & 7 & [5,6] & [7,15] \\
panama & 556686 & 702437 & 7015 & 2.5 & 62 & 7015 & 600.49 & 601.18 & 601.09 & 601.14 & [4,63] & [6,62] & [6,63] & [6,125] \\
paradise & 542102 & 794545 & 35359 & 2.9 & 23 & 35359 & 600.21 & 601.52 & 600.31 & 601.57 & [5,24] & [22,23] & [6,24] & [23,47] \\
photoviz\_dynamic & 376 & 610 & 29 & 3.2 & 4 & 29 & 0.20 & 0.02 & 0.01 & 0.01 & 3 & 3 & [3,4] & [4,9] \\
pigs & 492 & 592 & 39 & 2.4 & 2 & 39 & 0.01 & 0.01 & 0.01 & 0.01 & 2 & 2 & [2,3] & [2,5] \\
polblogs & 1224 & 16715 & 351 & 27.3 & 36 & 351 & 600.86 & 87.99 & 92.62 & 163.87 & [5,37] & [5,36] & [5,37] & [6,73] \\
polbooks & 105 & 441 & 25 & 8.4 & 6 & 25 & 0.05 & 0.01 & 0.01 & 0.01 & 4 & 5 & [6,7] & [6,13] \\
pollination-carlinville & 1500 & 15255 & 157 & 20.3 & 18 & 157 & 600.04 & 600.37 & 600.30 & 68.80 & [5,19] & [6,18] & [6,19] & [18,37] \\
pollination-daphni & 797 & 2933 & 124 & 7.4 & 9 & 124 & 292.96 & 0.46 & 0.46 & 0.06 & 5 & 6 & [6,7] & [9,19] \\
pollination-tenerife & 68 & 129 & 17 & 3.8 & 4 & 17 & 0.01 & 0.01 & 0.01 & 0.01 & 3 & 4 & [3,4] & [4,9] \\
ratbrain & 503 & 23030 & 497 & 91.6 & 67 & 497 & 600.50 & 538.67 & 601.38 & 600.83 & [4,68] & [5,67] & [5,68] & [5,135] \\
reactome & 6327 & 147547 & 855 & 46.6 & 191 & 855 & 600.17 & 600.17 & 600.21 & 600.23 & [3,192] & [3,191] & [3,192] & [3,383] \\
residence\_hall & 217 & 1839 & 56 & 16.9 & 11 & 56 & 600.10 & 2.54 & 0.13 & 0.09 & [4,12] & 6 & [10,11] & [11,23] \\
rhesusbrain & 242 & 3054 & 111 & 25.2 & 19 & 111 & 600.09 & 600.19 & 600.23 & 22.19 & [5,20] & [9,19] & [11,20] & [19,39] \\
roget-thesaurus & 1010 & 3648 & 28 & 7.2 & 6 & 28 & 228.14 & 0.13 & 0.03 & 0.02 & 4 & 3 & [6,7] & [6,13] \\
slashdot\_threads & 51083 & 117378 & 2915 & 4.6 & 14 & 2915 & 600.13 & 600.17 & 600.14 & 13.23 & [5,15] & [5,14] & [6,15] & [14,29] \\
soc-Epinions1 & 75879 & 405740 & 3044 & 10.7 & 67 & 3044 & 601.05 & 601.06 & 600.92 & 601.81 & [3,68] & [3,67] & [3,68] & [3,135] \\
soc-Slashdot0811 & 77360 & 469180 & 2539 & 12.1 & 54 & 2539 & 169.08 & 174.42 & 179.09 & 179.02 & [3,55] & [3,54] & [3,55] & [3,109] \\
soc-advogato & 5167 & 39432 & 807 & 15.3 & 25 & 807 & 600.47 & 145.63 & 146.80 & 145.69 & [5,26] & [6,25] & [6,26] & [6,51] \\
soc-gplus & 23628 & 39194 & 2761 & 3.3 & 12 & 2761 & 600.04 & 4.74 & 25.84 & 0.28 & [6,13] & 9 & [7,8] & [12,25] \\
soc-hamsterster & 2426 & 16630 & 273 & 13.7 & 24 & 273 & 600.10 & 83.26 & 131.04 & 229.61 & [5,25] & [11,24] & [23,25] & [24,49] \\
soc-wiki-Vote & 889 & 2914 & 102 & 6.6 & 9 & 102 & 600.09 & 0.19 & 0.11 & 0.04 & [4,10] & 5 & [7,8] & [9,19] \\
sp\_data\_school\_day\_2 & 238 & 5539 & 88 & 46.5 & 33 & 88 & 600.31 & 159.31 & 159.79 & 165.56 & [5,34] & [6,33] & [7,34] & [7,67] \\
teams & 935591 & 1366466 & 2671 & 2.9 & 9 & 2671 & 600.24 & 256.73 & 275.42 & 19.94 & [6,10] & 6 & [6,7] & [9,19] \\
train\_bombing & 64 & 243 & 29 & 7.6 & 10 & 29 & 0.40 & 0.09 & 0.01 & 0.01 & 3 & 5 & 11 & [10,21] \\
tv\_tropes & 152093 & 3232134 & 12400 & 42.5 & 115 & 12400 & 476.72 & 489.41 & 494.33 & 496.33 & [2,116] & [2,115] & [2,116] & [2,231] \\
twittercrawl & 3656 & 154824 & 1084 & 84.7 & 143 & 1084 & 600.46 & 600.68 & 600.37 & 600.47 & [3,144] & [3,143] & [3,144] & [3,287] \\
unicode\_languages & 868 & 1255 & 141 & 2.9 & 4 & 141 & 1.06 & 0.94 & 0.01 & 0.02 & 3 & 4 & [3,4] & [4,9] \\
wafa-ceos & 26 & 93 & 22 & 7.2 & 5 & 22 & 0.01 & 0.01 & 0.01 & 0.01 & 3 & 3 & [5,6] & [5,11] \\
wafa-eies & 45 & 652 & 44 & 29.0 & 24 & 44 & 50.46 & 600.54 & 600.58 & 600.46 & 4 & [14,24] & [22,25] & [13,49] \\
wafa-hightech & 21 & 159 & 20 & 15.1 & 12 & 20 & 0.17 & 0.40 & 0.13 & 10.16 & 3 & 7 & [10,11] & [8,17] \\
wafa-padgett & 15 & 27 & 8 & 3.6 & 3 & 8 & 0.01 & 0.01 & 0.01 & 0.01 & 2 & 2 & [3,4] & [3,7] \\
web-EPA & 4271 & 8909 & 175 & 4.2 & 6 & 175 & 600.11 & 0.10 & 0.21 & 0.02 & [3,7] & 5 & [4,5] & [6,13] \\
web-california & 6175 & 15969 & 199 & 5.2 & 11 & 199 & 600.07 & 0.33 & 3.95 & 0.16 & [4,12] & 11 & [10,11] & [11,23] \\
web-google & 1299 & 2773 & 59 & 4.3 & 17 & 59 & 600.02 & 50.40 & 0.58 & 0.71 & [3,18] & 9 & 18 & [17,35] \\
wikipedia-norm & 1881 & 15372 & 455 & 16.3 & 22 & 455 & 600.14 & 482.15 & 161.62 & 140.63 & [6,23] & [10,22] & [11,23] & [11,45] \\
win95pts & 99 & 112 & 9 & 2.3 & 2 & 9 & 0.01 & 0.01 & 0.01 & 0.01 & 2 & 2 & 3 & [2,5] \\
windsurfers & 43 & 336 & 31 & 15.6 & 11 & 31 & 0.77 & 0.39 & 0.10 & 0.03 & 4 & 6 & [9,10] & [11,23] \\
word\_adjacencies & 112 & 425 & 49 & 7.6 & 6 & 49 & 0.01 & 0.03 & 0.01 & 0.01 & 4 & 4 & [5,6] & [6,13] \\
zewail & 6651 & 54182 & 331 & 16.3 & 18 & 331 & 600.09 & 601.22 & 601.20 & 96.19 & [5,19] & [9,18] & [10,19] & [18,37] \\
\end{longtable}
\endgroup  \end{landscape}

\end{document}